\documentclass{elsarticle}

\usepackage{amsmath,amsthm,amssymb,graphicx,xspace,verbatim,amsfonts,mathrsfs}
%\usepackage{showkeys}

%\journal{Journal of Theoretical Biology}

%\DeclareGraphicsRule{.tif}{png}{.png}{`convert #1 `dirname #1`/`basename #1 .tif`.png}

\usepackage{paralist} %compactenum
 
\theoremstyle{plain} 
\newtheorem{thm}{Theorem}
\newtheorem{lem}[thm]{Lemma} 
\newtheorem{prop}[thm]{Proposition}
\newtheorem{cor}[thm]{Corollary}

\theoremstyle{definition} 
\newtheorem*{defn}{Definition}
\newtheorem*{example}{Example}
\newtheorem*{rem}{Remark}
 
\newcommand{\MRCA}{\operatorname{MRCA}}
\newcommand{\desc}{\operatorname{desc}_\mathcal X}

\newcommand{\PP}{\mathbb P}
\newcommand{\tc}{\text{:}} % tc=tight colon, for use in math mode in Newick trees

% Keep hyperref last among includes
\usepackage{hyperref}
\hypersetup{backref,  pdfpagemode=FullScreen,  urlcolor=blue, colorlinks=true, citecolor=blue, hyperindex=true}

\begin{document}

\begin{frontmatter}

\title{Determining species tree topologies from clade probabilities under the coalescent}

\author[UAF]{Elizabeth S. Allman}
\author[JHD]{James H. Degnan}
\author[UAF]{John A. Rhodes}
\address[UAF]{Department of Mathematics and Statistics, University of Alaska Fairbanks, PO Box 756660, Fairbanks, AK 99775, U.S.A}
\address[JHD]{Department of Mathematics and Statistics, University of Canterbury, Private Bag 4800, Christchurch, New Zealand}

\begin{abstract} 
One approach to estimating a species tree from a collection of gene trees is to first estimate probabilities of clades from the gene trees, and then to construct the species tree from the estimated clade probabilities. While a greedy consensus algorithm, which consecutively accepts the most probable clades compatible with previously accepted clades, can be used for this second stage, this method is known to be statistically inconsistent under the multispecies coalescent model.  This raises the question of whether it is theoretically possible to reconstruct the species tree from known probabilities of clades on gene trees.

We investigate clade probabilities arising from the multispecies coalescent model, with an eye toward identifying features of the species tree. Clades on gene trees with probability greater than 1/3 are shown to reflect clades on the species tree, while those with smaller probabilities may not. Linear invariants of clade probabilities are studied both computationally and theoretically, with certain linear invariants giving insight into the clade structure of the species tree. For species trees with generic edge lengths, these invariants can be used to identify the species tree topology. These theoretical results both confirm that clade probabilities contain full information on the species tree topology and suggest future directions of study for developing statistically consistent inference methods from clade frequencies on gene trees.
\end{abstract}

\begin{keyword} coalescent, statistical consistency, identifiability, consensus, phylogenetics
\MSC[2010] 62P10 \sep 92D15 
\end{keyword}

\end{frontmatter}

\section {Introduction}

A fundamental problem in evolutionary biology is to determine relative
relatedness of species, usually by seeking a rooted tree that diagrammatically depicts these relationships.  
Although phylogenetic methods of inferring relationships between genes sampled from
individuals in the different species are now highly developed,  such gene trees are not species trees.
Even in the absence of errors due to estimating gene trees from DNA sequences, gene tree topologies need not match the underlying species tree.
In recent years, various methods have been proposed for inferring species trees from genetic data \cite{degnan2009,edwards2009,KK}. 
Many of these methods
first estimate gene trees, and then 
resolve the possible conflicts among them to obtain an overall estimate of the species tree.

An important cause of gene tree conflict is the population effect of \emph{incomplete lineage sorting}, in which gene lineages  coalesce in ancestral populations earlier than the time these lineages first enter a common ancestral population.
The \emph{multispecies coalescent model} 
\cite{pamilo1988,rosenberg2002,rannala2003,degnan2005,degnan2009}
is commonly used to model this process, producing a distribution of rooted gene trees given a rooted species tree
topology and branch lengths (a measure of time and population size on
each edge of the species tree).  The multispecies coalescent provides 
a natural framework for incorporating population effects, allowing gene trees to possibly be 
discordant with the species tree (see Fig.~\ref{F:partition}), a phenomenon that is
very common in multilocus studies
\cite{rokas2003,ebersberger2007,cranston2009}.

\begin{figure}[!t]
\begin{center}
\setlength{\unitlength}{0.05 mm}%     
   \begin{picture}(100,1204)(0,0)
   \put(-1000,200){\fontsize{14.23}{17.07}\selectfont   \makebox(2436.0, 0.0)[l]{$a$ \;\;\;\;\;\;\;$b$ \;\;\;\;$c$\; \;\;$d$\; \;$e$\strut}}
    \put(-100,650){\fontsize{11.23}{14.07}\selectfont \makebox(2436.0,0.0)[l]{$\MRCA(\{a,b,c\})$\strut}}
\put(-1250,0){
\includegraphics[width=.48\textwidth]{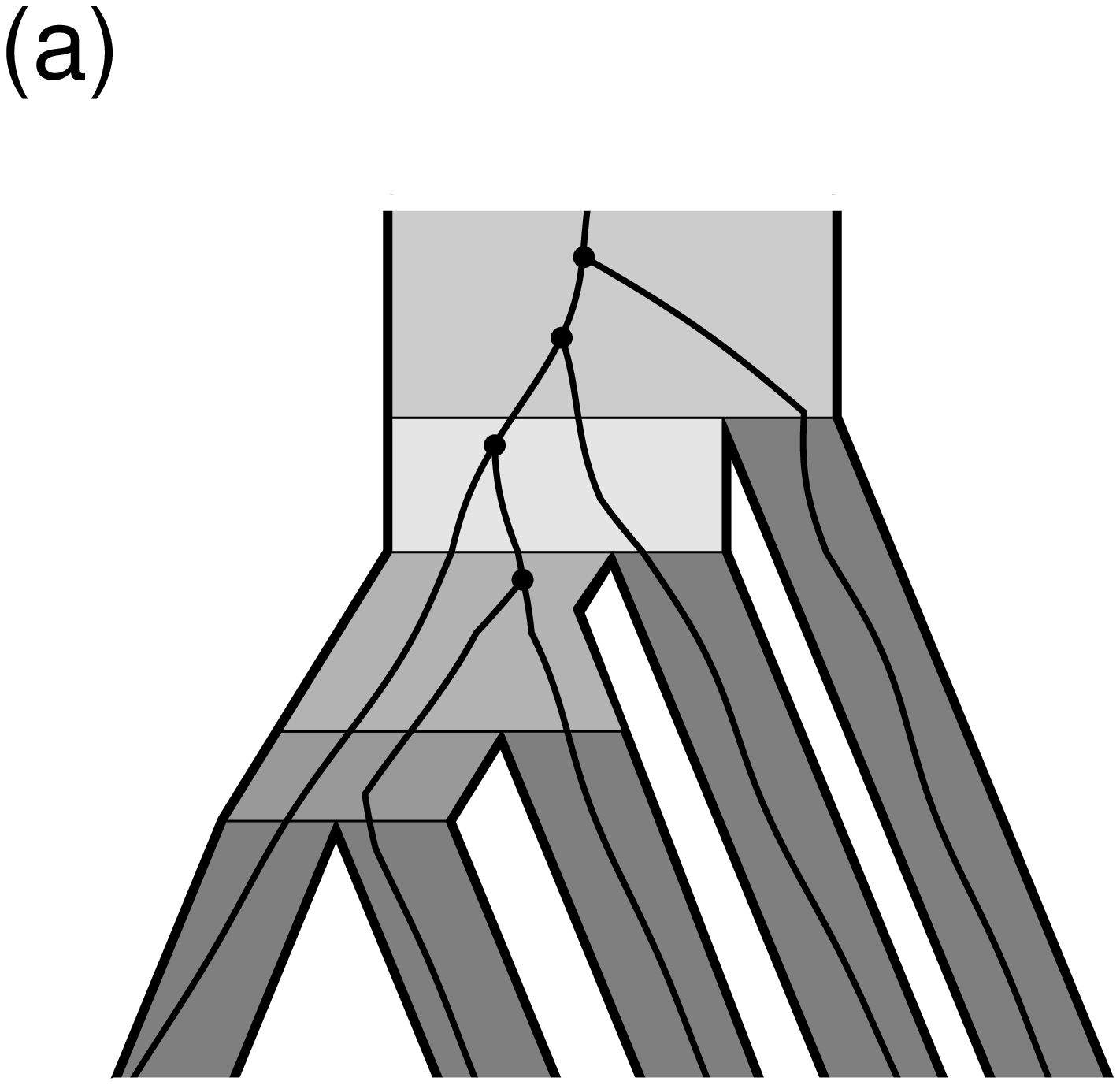}}
\end{picture}
   \begin{picture}(100,1204)(0,0)
      \put(300,200){\fontsize{14.23}{17.07}\selectfont   \makebox(2436.0, 0.0)[l]{$a$ \;\;\;\;\;\;\;$b$ \;\;\;\;$c$\; \;\;$d$\; \;$e$\strut}}
\put(50,0){
\includegraphics[width=.48\textwidth]{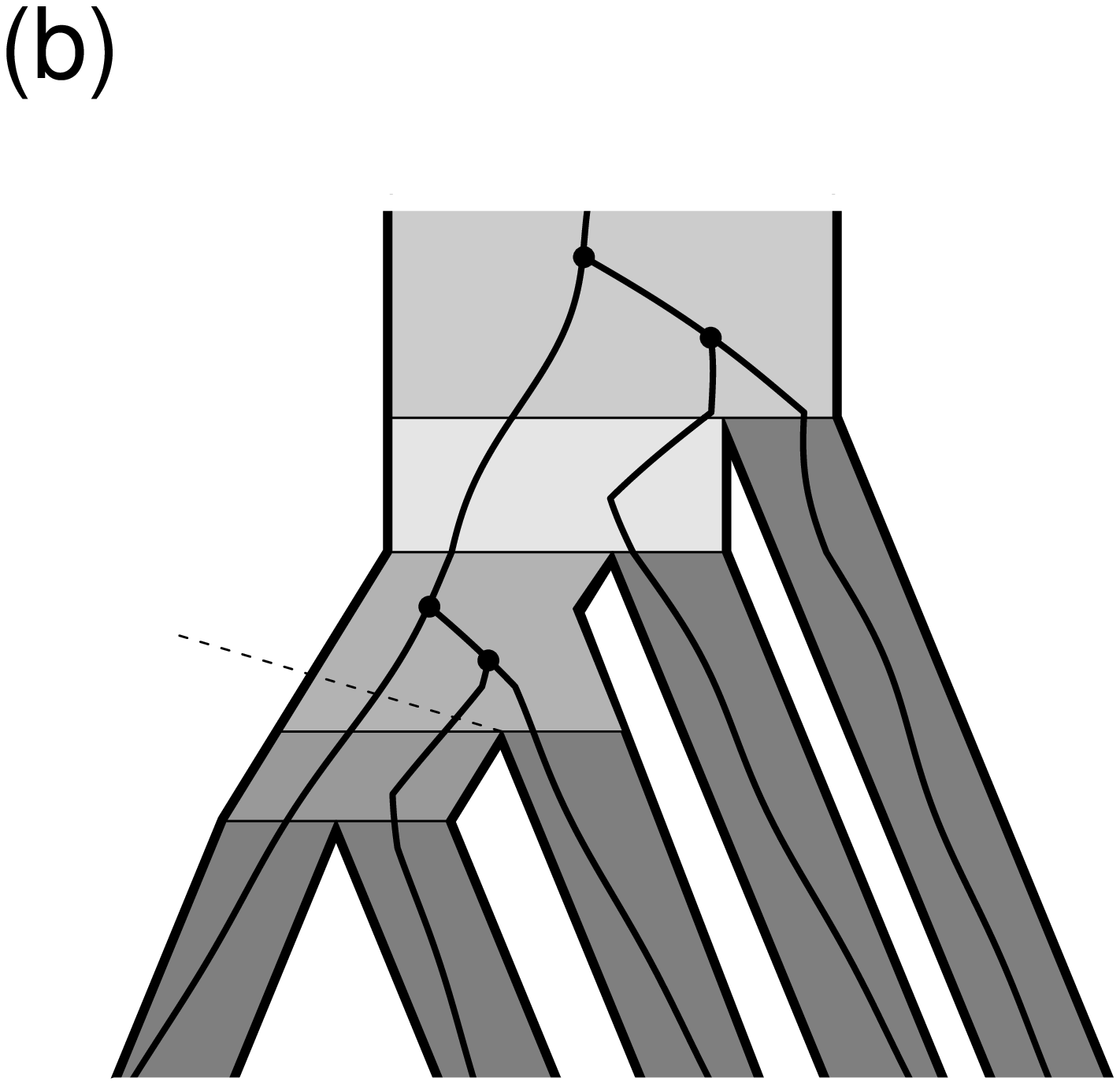}}
\end{picture}
\caption{Gene trees within a species tree.  In the multispecies
  coalescent, gene lineages sampled from species are assumed to
  coalesce (form nodes in the gene tree) no more recently than their
  most recent common ancestor (MRCA) in the species tree.  Coalescence
  of lineages in populations more ancient than their MRCA can lead to
  gene tree topologies that are discordant with the species tree
  topology. Using upper case letters for gene lineages sampled from
  their corresponding species, failure of the $A$ and $B$ lineages to
  coalesce in their MRCA population makes any of the $\binom{3}{2}$
  coalescences between $A$, $B$, and $C$ equally likely under the
  model in the MRCA population of $a$, $b$, and $c$.  (a) The gene
  tree is $((((B,C),A),D,E)$. (b) The gene tree is
  $(((B,C),A),(D,E))$.}
\label{F:partition}
\end{center}
\end{figure}

Although the distribution of gene tree topologies from the
multispecies coalescent determines the species tree \cite{adr2011},
estimating this distribution is difficult
because there are so many possible topologies: $(2n-3)!!$ when $n$ species are under study.
Thus most topologies are unlikely to be observed among a moderate
number of gene trees.  An alternative is to estimate a smaller set of
probabilities which is a function of gene tree probabilities but that
still retains enough information to identify the species tree.  Other works have considered rooted triples \cite{degnanEtAl2009,ewing2008,liuBMC2010} and unrooted gene tree topologies \cite{adr2011,larget2010}. Another
possibility, which is our focus here, is to use probabilities that a gene tree has a given
\emph{clade}, a set of leaves descended from a 
node of the gene tree
that is not ancestral to any other leaves in the gene
tree.  The probability of a clade under the
multispecies coalescent (or any model of gene tree generation) is obtained by simply adding the probabilities of all gene trees
that display the given clade \cite{degnanEtAl2009}.

The probability of a clade can be
estimated from a collection of gene trees by considering the
proportion of gene trees displaying the clade.  Since this procedure does
not take into account uncertainty in the gene trees, which are
themselves estimates from genetic data, a more sophisticated method
would quantify the uncertainty in the clades by using posterior
probabilities or bootstrap support values for clades obtained from
Bayesian or maximum likelihood analyses of the gene trees.  The
software BUCKy \cite{ane2007}, for example, takes this approach, using posterior
probabilities for clades and additionally incorporating a prior
distribution for the amount of gene tree conflict to yield a
\emph{concordance factor} for each clade.

\medskip

One of the most straightforward methods for constructing a species tree from clade probabilities is to use 
\emph{greedy consensus}, in which the clade with the highest probability (or concordance factor) 
is accepted, provided it is compatible with previously accepted clades. This process is repeated 
until a fully resolved tree is formed \cite{bryant2003}.  This procedure is implemented in  BUCKy to 
construct a \emph{concordance tree}, which is sometimes interpreted as an estimated species 
tree \cite{cranston2009}.  

To justify a greedy approach, one needs to investigate whether the most probable clades tend also to be 
clades on the species tree.
Indeed, we show in Section \ref{sec:high} that under the multispecies 
coalescent, any clade with probability greater than 1/3 must be on the species tree, suggesting that the standard majority-rule consensus (which only accepts clades occurring more 
than 50\% of the time) is very conservative in this setting.   If the greedy 
consensus approach is used for clades with probability greater than 1/3 (leaving the tree 
unresolved with respect to clades with lower probability), then this ``not-too-greedy" consensus 
approach is not misleading, in the sense that it asymptotically cannot return a false species 
tree clade as the number of loci approaches infinity.  

In contrast, previous results have shown that when greedy consensus is applied
without restrictions on clade probabilities, the returned tree can be
misleading (\emph{i.e.}, for some species trees, as the number of loci increases, 
the greedy consensus method is
increasingly likely to produce a tree that disagrees with the true
species tree) for some sets
of branch lengths \cite{degnanEtAl2009}. These ``too-greedy zones" of
edge lengths occur on $4$-taxon asymmetric species trees and on any
species tree topology with five or more leaves.  Thus, caution must be
used when probabilities of clades are less than 1/3; it is not
obvious how to determine which low-probability
clades are on the species tree,
even if clade probabilities are known exactly.  Other examples show
that the most probable $k$-clade (a clade of $k\ge 2$ elements), is
not necessarily a clade on the species tree, even if the species tree
is known to have a $k$-clade.
 
Undeterred by these negative results, we show in Sections \ref{sec:inv} and \ref{sec:id} that under the
multispecies coalescent with one lineage sampled per species, the set
of clade probabilities does identify the species tree topology for
generic branch lengths for any number of species.  The proof is based on discovering a linear combination of clade
probabilities (a linear invariant) that is equal to zero for any
branch lengths on any species tree with a given clade.
In theory, if clade probabilities are known,
it is therefore possible to identify the species tree by determining
all of its clades.  While this suggests a statistically consistent method of inferring a species tree from estimated clade probabilities, it remains a challenge to incorporate this insight into a practical method
that outperforms greedy consensus on most finite data
sets.

Finally, in Section \ref{sec:id} we extend our results,  in part, to cases where the species tree is
non-binary and where an arbitrary number of lineages
is sampled per species.

\medskip

Although we frame our questions within the framework of the multispecies coalescent, a careful reading of our arguments reveals that the essential feature of the model that  we use is that lineages are \emph{exchangeable}. If two gene lineages are present in the same population at a particular point in time on the species tree, then above that point, the model assumes that both lineages 
behave the same way. Much of this work, then, should be robust to variations on the coalescent model that preserve exchangeability.

\medskip

On a more technical note, there is a key difference in understanding clade probabilities versus many
other sets of probabilities related to gene trees or species trees: the failure of marginalization arguments. As this difference plays an important, but unspoken, part throughout this work, we highlight it here.

The problem of establishing identifiability of a species tree from unrooted gene tree probabilities that was taken up in
\cite{adr2011} is superficially similar to the clade problem of this paper.  Both unrooted gene tree
probabilities and clade probabilities can be obtained by summing
probabilities of appropriate rooted gene trees.  The sum is either
over all rooted gene trees with the same unrooted topology, or over all
rooted gene trees that have the clade in question.  

Note also
that the probability of a gene tree on a subset of the taxa can be obtained by
summing probabilities of gene trees on the full set that display the
given gene tree when restricted to the subset.  
Such marginalization
of a gene tree distribution to fewer taxa is possible for either
rooted or unrooted gene trees.  Consequently, for most arguments in \cite{adr2011} it was
sufficient to focus on small trees, with at most five taxa. Indeed, similar marginalization
arguments are standard throughout phylogenetic theory.

Unfortunately, a marginalization approach fails for studying clades when the species tree is unknown.
Given clade probabilities arising from an $n$-taxon species  tree $\sigma$ under the multispecies coalescent,
one would like to be able to determine clade probabilities arising from an induced $k$-taxon 
tree displayed on $\sigma$.
However, probabilities of clades on the $k$-taxon induced tree
cannot be obtained from a linear combination of the clade
probabilities associated with the $n$-taxon species tree in a way that
is independent of the species tree topology.  
We demonstrate this formally
is the case where $k=3$ and $n=4$ in
Appendix \ref{ap:linear}.  

This inability to marginalize clade
probabilities without knowing the species tree topology motivated looking for
an invariant that would hold for clades on trees of any size.
Although only linear invariants are needed in the proof of
identifiability, the invariants constructed for $k$-clades involve a
linear combination of $2^{k-1}$ clade probabilities.  These rather
elaborate invariants and the inability to marginalize clade
probabilities to smaller trees lead to a different flavor for the
proof of species tree identifiability from clade probabilities.

\section{Definitions}\label{sec:defs}

Let $\mathcal X$ be a finite set, whose elements we refer to as
\emph{taxa}.  A \emph{species tree on} $ \mathcal X$ means a pair
$\sigma = (\psi, \lambda)$, where $\psi$ is a rooted, topological tree
whose leaves are bijectively labelled by elements of $ \mathcal X$,
and $\lambda = (\lambda_1, \dots, \lambda_k)$ is a collection of
lengths for the internal branches of $\psi$. We refer to $\psi$ as a
\emph{species tree topology}, and always assume all internal nodes of
$\psi$ except the root have degree at least 3. If all internal nodes
except the root have degree 3 and the root has degree 2, we say that $\psi$
and $\sigma$ are \emph{binary}.

We use a modified Newick notation for species trees, as in
\cite{adr2011}, in which we do not specify the lengths of pendant
edges, since only the lengths of internal edges affect probabilities
of gene tree topologies under the multispecies coalescent.  For
example, we write $((a,b)\tc t, c)$ for a 3-taxon species tree with
one internal edge with length $t$, measured in coalescent
units.  If there is a constant effective population size,  $N$, over 
an edge of the species tree, then a length of $t$ indicates that the edge 
represents $Nt$ generations \cite{degnan2009}. For varying effective 
population size, a non-linear scaling is needed to relate coalescent units 
to generations.   Species trees are thus not assumed to be ultrametric in coalescent units.

In discussing trees, we find it convenient in various settings to use
either spatial or temporal terminology. For instance, if $(v,w)$ is a
directed edge in $\psi$ pointing away from the root, then we may say
that $v$ is \emph{above}, or an \emph{ancestor}, of $w$ and that $w$ is
\emph{below}, or a \emph{descendant}, of $v$.  Natural extensions of
these terms should be clear from context.

\smallskip

We denote taxa in $\mathcal X$ by lower case letters such
as $a,b,c,\dots$. To distinguish between taxa and sampled genes
from those taxa, we use the corresponding upper case letters
$A,B,C,\dots$ to denote the genes, with the set $\mathcal X_g$
denoting the full set of genes, one for each taxon. Similarly, a subset of taxa 
$\mathcal C \subseteq \mathcal X$ has a corresponding subset of genes 
$\mathcal C_g \subseteq \mathcal X_g$.  A sampled gene
tree from the multispecies coalescent model on $\sigma$ will thus have
leaves labelled by $\mathcal X_g$, and in general may have any
topology, regardless of the species tree topology $\psi$. More
specifically, by a \emph{gene tree} $T$ we mean a binary, rooted
topological tree with leaves bijectively labelled by $ \mathcal
X_g$.  We emphasize that for this article gene trees are topological
only, with no edge lengths specified. We require that gene trees be
binary, since under the multispecies coalescent only binary gene trees
have positive probability.

\begin{defn}
  If $\psi$ is a species tree topology on $ \mathcal X$, and $\mathcal
  A\subseteq \mathcal X$, then the \emph{most recent common ancestor
  of $\mathcal A$}, $\MRCA(\mathcal A)$, is the node of $\psi$
  that is ancestral to all elements of $\mathcal A$ and which is a descendant of
  any other node ancestral to all elements of $\mathcal A$.
\end{defn}

\begin{defn} Let $\desc(v)\subseteq \mathcal X$ denote the elements of
  $\mathcal X$ descended from a node $v$ of a species tree topology
  $\psi$ on $\mathcal X$, so that if $\mathcal A\subseteq \mathcal X$,
  then $\mathcal A\subseteq \desc(\MRCA(\mathcal A))\subseteq \mathcal
  X$. A \emph{clade} $\mathcal C$ on $\psi$ is a subset of $\mathcal
  X$ such that $\mathcal C=\desc(\MRCA(\mathcal C)).$
\end{defn}

The notions of MRCA and clade extend to gene trees in an obvious way,
replacing $\mathcal X$, $\psi$, $\mathcal C$, with $\mathcal X_g$,
$T$, $\mathcal C_g$ in the definitions.

\begin{defn} For a gene tree $T$, the set of all clades on $T$ is
  denoted $\mathcal H(T)$. Similarly, for a species tree
  $\sigma=(\psi,\lambda)$ the set of clades on $\psi$ is denoted
  $\mathcal H(\sigma)=\mathcal H(\psi)$.
\end{defn}

In discussing the relationships between a subset $\mathcal Y$ of the taxa $\mathcal X$ on a tree $\psi$, we use the terminology of a \emph{displayed tree}:
a tree obtained from the full tree by first passing to the rooted subtree spanned by $\mathcal Y$, and then
suppressing any non-root nodes of degree 2 \cite{semple2003}. As an example, the species tree in Fig.~\ref{F:partition} displays $((b,d),e)$.
The notion of displayed trees 
can be applied in the context of either species trees (with or without branch lengths) or gene trees.

\smallskip

A detailed presentation of the multispecies coalescent model is given
in \cite{adr2011}, so we omit repeating that here. 
Because we focus in this paper on the probabilities of observing
gene trees or clades on gene trees under that model, we fix the following notation.

\begin{defn}
  Under the multispecies
  coalescent model on a fixed species tree $\sigma$ on taxa $\mathcal X$, the probabilities
  of a gene tree $T$, and a clade 
  $\mathcal C_g$ on gene trees are denoted $\PP_\sigma(T)$ and
  $\PP_\sigma(\mathcal C_g)$, respectively.
\end{defn}

If more than one lineage is sampled per species, a generalization
of our results on species tree identifiability still holds.  For this extension, we
require the following definitions. 
(See Fig.~\ref{F:intra} for an example.)
\begin{defn}\label{def:extended}
  Let $\mathcal X = \{x_{1}, \dots, x_{n}\}$ be a taxon set, with
  $|\mathcal X| = n$.  Let $\delta = (\delta_1, \dots, \delta_n)$ be
  the number of individuals sampled from species $x_i$, $i = 1, \dots,
  n$. With $x_{ij}$, $1\le j\le\delta_i$, denoting the individuals in
  taxon $x_i$, $\mathcal X^*=\{x_{ij}\}$ is the set of all sampled
  individuals, so $|\mathcal X^*|=\sum_{i=1}^n \delta_i$.

  An \emph{extended species tree} $\sigma^* = (
  \psi^*,\lambda,\delta)$ on $\mathcal X$ is a species tree
  $(\psi^*,\lambda)$ on $\mathcal X^*$ such that for each $1\le i\le
  n$ all the leaves $x_{ij}$, $1\le j\le \delta_i$ have a common
  parent in $\psi^*$. 

The \emph{pruned species tree topology} $\psi$ on
  $\mathcal X$ is obtained from $ \psi^*$ by 
  labeling the parent of the $x_{ij}$ by $x_i$
  for each $i$ with $\delta_i > 1$, 
  and then excising
  the leaves $x_{ij}$ and the pendant edges on which they lie.
\end{defn}

Note that while an extended species tree gives rise to a species tree
by the pruning process,
in an extended species tree a branch length is assigned to those edges
which become pendant in the species tree whenever there are two or more
sampled individuals in the taxon. Since our notion of a species tree in this 
paper does not have pendant edge lengths, an extended species tree thus 
carries more edge length information than the associated species tree.

Gene trees arising from the coalescent model on an extended species tree
have leaves labelled $\mathcal X_g = \{X_{11}, \dots, X_{1\delta_1},
\dots, X_{n1}, \dots, X_{n\delta_n}\}$ and are, with probability 1,
binary. One readily checks that the
probability of such a gene tree under the multispecies coalescent on
the extended species tree is exactly the same as the
probability of the gene tree under a multiple individual sampling
scheme on the species tree (with some pendant edge lengths) obtained
by pruning. Indeed, this is why we have introduced such trees. We will use 
them to easily extend results where one individual is sampled per species 
to the multiple sampling situation, in Proposition 13 and Corollary 14.

\begin{figure}[!t]
\begin{center}
\setlength{\unitlength}{0.05 mm}%    
   \begin{picture}(100,1204)(0,0)

\put(-1200,0){
\includegraphics[width=.48\textwidth]{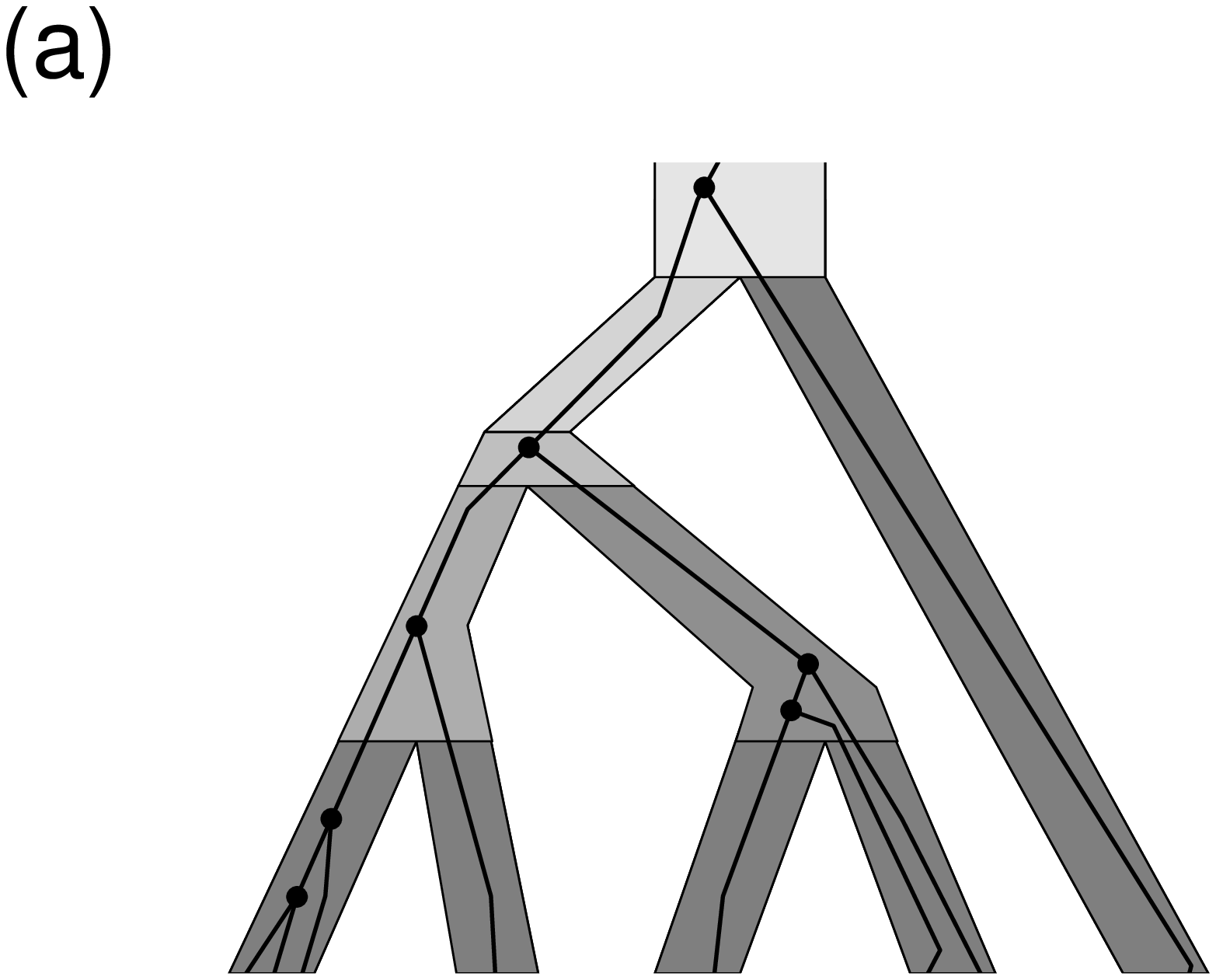}}
\put(-950,290){\fontsize{11.23}{14.07}\selectfont   \makebox(2436.0, 0.0)[l]{$x_1$ \;\;\;\;\;$x_2$ \;\;$x_3$\; \;\;\;\;$x_4$\; \;\;\;$x_5$\strut}}
\end{picture}
   \begin{picture}(100,1204)(0,0)
\put(0,0){
\includegraphics[width=.48\textwidth]{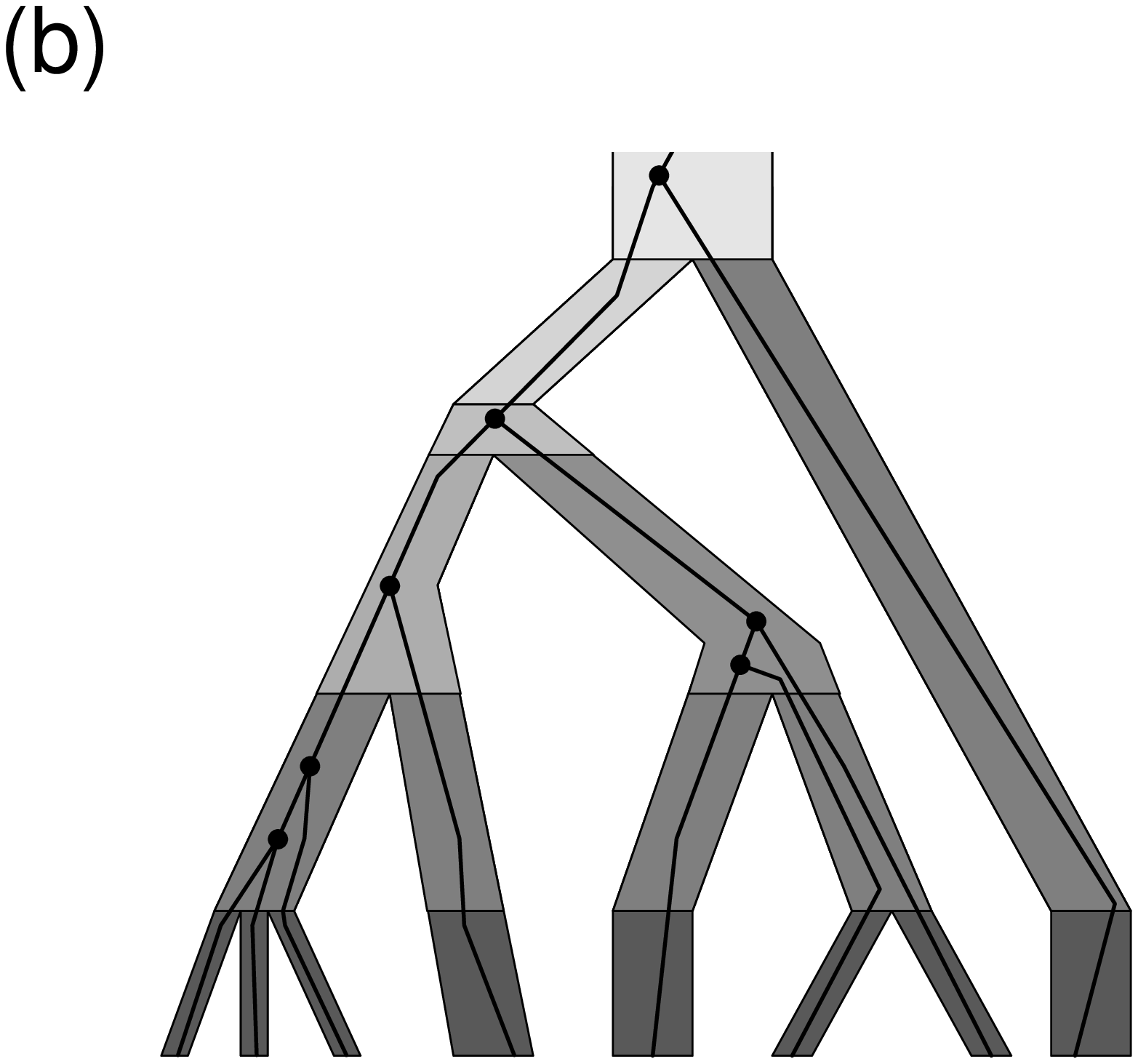}}
\put(100,150){\fontsize{11.23}{14.07}\selectfont   \makebox(2436.0, 0.0)[l]{$x_{11}\, x_{12}\, x_{13}\,x_{21}$ \;$x_{31}\, x_{41}\;\; x_{42}$ $x_{51}$\strut}}
\end{picture}
\caption{(a) A gene tree with multiple lineages sampled from several
  species within a species tree.  The taxa are $\mathcal X =
  \{x_1, x_2, x_3, x_4, x_5\}$ with $\delta = (3,1,1,2,1)$ lineages
  sampled from them. (b) The extended version of the
  species tree with taxa $\mathcal X^* = \{x_{11}, x_{12}, \dots,
  x_{51}\}$ and one lineage sampled per taxon.  Under the multispecies coalescent, the probability of
  any clade $\mathcal A_g \subset \mathcal X^*_g$ is the same 
  for both the species trees in (a) and in (b).}
\label{F:intra}
\end{center}
\end{figure}

Finally, note that by construction, for each $i=1, \dots, n$, the set $\mathcal A_i = \{x_{i1},
\dots, x_{i\delta_i}\}$ is a clade on the extended species tree.  But
of course a set $(\mathcal A_i)_g=\{X_{i1}, \dots,
X_{i\delta_i}\}$ need not be a clade on any given gene tree.

\section{Arbitrary gene tree distributions}
Though the remainder of this paper is concerned only with the gene
tree distribution arising from the multispecies coalescent model, in
this section we investigate clade probabilities for arbitrary binary gene
tree distributions. The main observation is that without special
assumptions on the gene tree distribution, the clade probabilities do
not contain enough information to recover the gene tree distribution.

\medskip

Note that every gene tree must have as clades all singleton sets of gene labels,
as well as the full set $\mathcal X_g$. We refer to these
as \emph{trivial clades}. Any other subset $\mathcal C_g\subset
\mathcal X_g$ is a clade on some gene trees, but not others.

For an arbitrary distribution of gene trees on a taxon set $\mathcal
X$, let $\PP(T)$ denote the probability of gene tree $T$. Then for
each subset $\mathcal C_g\subseteq \mathcal X_g$, the probability that
$\mathcal C_g$ is a clade on a gene tree is
 $$\PP(\mathcal C_g)= \sum_T \PP(\mathcal C_g | T) \PP(T) = \sum_T I(\mathcal C_g\in \mathcal H(T))\, \PP(T),$$
 where $I$ is the indicator function with values of 1 or 0.  Note that
 the probability of any trivial clade is therefore 1.

 We emphasize that the clade probabilities for an $n$-taxon species
 tree $\sigma$ do \emph{not} form a probability distribution. The
 presence of different clades may not be mutually exclusive events
 (for instance, if $\mathcal C_g \subset \mathcal C'_g$), and their
 probabilities do not sum to 1. 
 
\begin{prop} \label{prop:trivinv}
If $|\mathcal X|=n$, then for any distribution of binary gene trees the sum of the probabilities of all non-trivial clades is $n-2$.
\end{prop}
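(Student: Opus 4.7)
The plan is to swap the order of summation in the definition of $\PP(\mathcal C_g)$ and then reduce the problem to counting non-trivial clades on a single binary gene tree.

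First I would write
\begin{align*}
\sum_{\mathcal C_g \text{ non-trivial}} \PP(\mathcal C_g)
&= \sum_{\mathcal C_g \text{ non-trivial}} \sum_T I(\mathcal C_g \in \mathcal H(T))\, \PP(T) \\
&= \sum_T \PP(T) \left( \sum_{\mathcal C_g \text{ non-trivial}} I(\mathcal C_g \in \mathcal H(T)) \right) \\
&= \sum_T \PP(T)\, \bigl| \{\mathcal C_g \in \mathcal H(T) \tc \mathcal C_g \text{ non-trivial}\} \bigr|.
\end{align*}
So the claim reduces to showing that every binary rooted gene tree $T$ on $\mathcal X_g$ has exactly $n-2$ non-trivial clades, since then the inner count is a constant $n-2$ and can be pulled out, leaving $\sum_T \PP(T) = 1$.

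Next I would carry out the tree-counting step. In a binary rooted tree on $n$ labelled leaves, there are $n$ leaves and $n-1$ internal nodes (including the root), for a total of $2n-1$ nodes. The map sending a node $v$ to $\desc(v)$ (with the obvious extension to gene trees) is a bijection between nodes and clades, so $|\mathcal H(T)| = 2n-1$. Among these, the $n$ singletons and the full set $\mathcal X_g$ account for $n+1$ trivial clades, leaving exactly $(2n-1)-(n+1) = n-2$ non-trivial clades. This holds for every binary $T$, independently of topology.

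The argument is genuinely short, so there is no real obstacle; the only mild subtlety is being careful about what counts as trivial. In particular one should note that a binary rooted tree has its root as a degree-2 node whose descendant set is $\mathcal X_g$, which is explicitly excluded, and that each leaf contributes a distinct singleton clade. Combining the two steps yields $\sum \PP(\mathcal C_g) = (n-2)\sum_T \PP(T) = n-2$, as desired.
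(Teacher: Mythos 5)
Your proposal is correct and follows essentially the same route as the paper: interchange the two sums and use the fact that every binary rooted gene tree on $n$ leaves has exactly $n-2$ non-trivial clades. The paper asserts that count without proof, whereas you supply the (correct) node-counting justification, but the argument is otherwise identical.
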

\begin{proof} Denoting $n$-taxon gene trees by $T$,
$$\sum_{\mathcal C_g\subset \mathcal X_g \atop\text{non-trivial}} \mathbb P(\mathcal C_g)= \sum_{\mathcal C_g\subset \mathcal X_g \atop\text{non-trivial}}  \sum_{T} I(\mathcal C_g \in \mathcal H(T))\, \mathbb P(T)=\sum _{T} \sum_{\mathcal C_g \subset \mathcal X_g\atop\text{non-trivial}} I(\mathcal C_g\in \mathcal H(T))\,  \mathbb P(T).$$
But since each binary gene tree has $n-2$ non-trivial clades, this shows that
\begin{equation}\label{E:trivinv}
\sum_{\mathcal C_g \subset \mathcal X_g\atop\text{non-trivial}} \mathbb P(\mathcal C_g)=\sum _{T} (n-2) \mathbb P(T)
=n-2.
\end{equation}
\end{proof}

\begin{thm}\label{thm:toomany}
  For an arbitrary distribution of binary gene trees on a taxon set $\mathcal
  X$ with $|\mathcal X|\ge 4$, the gene tree probabilities $\PP(T)$
  cannot be identified from the clade probabilities $\PP(\mathcal
  C_g)$.
\end{thm}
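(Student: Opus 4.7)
The plan is to exploit the fact that $\PP \mapsto (\PP(\mathcal C_g))_{\mathcal C_g}$ is a linear map and reduce the claim to a dimension count. Let $V$ be the real vector space with basis indexed by binary rooted gene trees $T$ on $\mathcal X_g$, so $\dim V = (2n-3)!!$, and let $W$ be the vector space indexed by the non-trivial subsets of $\mathcal X_g$, so $\dim W = 2^n - n - 2$. Define the linear map
\[
\Phi\tc V \to W, \qquad \Phi(\mu)_{\mathcal C_g} = \sum_T \mu(T)\, I(\mathcal C_g \in \mathcal H(T)),
\]
which, when restricted to probability distributions on gene trees, returns the vector of non-trivial clade probabilities. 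An elementary induction shows $(2n-3)!! > 2^n - n - 2$ for all $n \geq 4$ (the base case is $15 > 10$, and the inductive step uses that the left side is multiplied by at least $7$ while the right side roughly doubles). Hence $\dim V > \dim W$ and $\ker \Phi$ contains a nonzero vector $\mu$.

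I would next check that any $\mu \in \ker \Phi$ automatically satisfies $\sum_T \mu(T) = 0$, so $\mu$ is tangent to the simplex of probability distributions. Interchanging summations and using that each binary gene tree has exactly $n - 2$ non-trivial clades (the identity at the heart of Proposition \ref{prop:trivinv}) gives
\[
0 = \sum_{\mathcal C_g \text{ non-triv}} \Phi(\mu)_{\mathcal C_g}
   = \sum_T \mu(T) \sum_{\mathcal C_g \text{ non-triv}} I(\mathcal C_g \in \mathcal H(T))
   = (n-2) \sum_T \mu(T),
\]
and since $n \geq 4$ the desired conclusion follows.

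Finally, I would let $\PP_0$ be the uniform distribution on binary gene trees and set $\PP_\pm = \PP_0 \pm \epsilon\, \mu$. The tangency just verified gives $\sum_T \PP_\pm(T) = 1$, and for $\epsilon > 0$ small enough both $\PP_\pm$ remain strictly positive, so both are genuine probability distributions. They are distinct because $\mu \neq 0$, yet $\Phi(\PP_+) = \Phi(\PP_-) = \Phi(\PP_0)$, so they share every clade probability. The only mildly nontrivial ingredient is the combinatorial inequality $(2n-3)!! > 2^n - n - 2$; this could be bypassed by exhibiting a concrete nonzero element of $\ker \Phi$ for $n = 4$ (some signed combination of caterpillars and balanced trees) and then extending it to larger $n$ by attaching the remaining $n-4$ taxa identically to every tree in its support, but the dimension argument is cleaner and requires no explicit bookkeeping.
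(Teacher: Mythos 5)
Your proposal is correct and takes essentially the same approach as the paper: a dimension count showing that the linear map from gene tree distributions to clade probabilities has a nontrivial kernel because $(2n-3)!! > 2^n-n-2$ for $n\ge 4$. You go slightly further than the paper by verifying that kernel vectors sum to zero and perturbing the uniform distribution to produce an explicit pair of genuine probability distributions with identical clade probabilities, which tidies up a step the paper leaves implicit in the phrase ``the map is not invertible at any point.''
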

\begin{proof} The set
  $\mathcal X_g$ has $2^n-n-2$ subsets $\mathcal C_g$ with $2\le
  |\mathcal C_g|\le n-1$. Using Proposition \ref{prop:trivinv}, the
  clade probabilities can thus be specified by point in a
  $(2^n-n-3)$-dimensional vector space. However, there are
  $(2n-3)!!=1\cdot 3\cdots(2n-3)$ binary gene trees on $\mathcal X_g$, so a
  gene tree distribution is specified by a point in a
  $((2n-3)!!-1)$-dimensional vector space. But
  since $$(2n-3)!!-1>2^n-n-3$$ when $n\ge 4$, and the map from gene
  tree probabilities to clade probabilities is linear, the map is not
  invertible at any point.
\end{proof}

We note that for an arbitrary distribution on multifurcating gene tree topologies, 
the trivial invariant in Eq.~\eqref{E:trivinv} need not hold. However, the argument establishing Theorem \ref{thm:toomany}
can be modified to apply to such distributions, since the number of multifurcating trees is greater than the number of binary ones.

\section{Highly probable gene tree clades are species tree clades}\label{sec:high}

For the remainder of the paper, we assume that both gene tree probabilities
$\PP_\sigma( T)$ and clade probabilities $\PP_\sigma(\mathcal C_g)$
arise from the multispecies coalescent on a species tree
$\sigma=(\psi,\lambda)$.

\begin{thm} Let $\sigma=(\psi,\lambda)$ be a binary species tree on
  $\mathcal X$, with edge lengths $\lambda_i>\epsilon \ge 0$.  Under the
  multispecies coalescent model, suppose $\mathcal C_g\subset \mathcal
  X_g$ has clade probability $\mathbb P_{\sigma}(\mathcal
  C_g)\ge(1/3)\exp(-\epsilon)$. Then $\mathcal C$ is a clade on
  $\sigma$; that is, $\mathcal C\in \mathcal H(\sigma)$.
  
Furthermore, if $(1/3) \exp(-\epsilon)$ is replaced with any smaller
  number, this statement is no longer true for all such choices of
species trees and non-trivial clades: For any $k<(1/3) \exp(-\epsilon)$, there exists a 
species tree $\sigma$on $\mathcal X$ and a taxon set $\mathcal C\subset \mathcal X$ with 
$1<|\mathcal C|<|\mathcal X|$ such that
$\mathcal C$ is not a clade on $\sigma$, yet $\mathbb P_{\sigma}(\mathcal
  C_g)\ge k$.
 \end{thm}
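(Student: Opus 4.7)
My plan is to prove the first statement by contrapositive, reducing the analysis on the full species tree to a three-taxon marginalization, and then to handle tightness with an explicit three-taxon example.

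Suppose $\mathcal{C}$ is not a clade on $\sigma$, and let $v = \MRCA(\mathcal{C})$. Then $\desc(v) \setminus \mathcal{C}$ contains some taxon $d$, and because $\psi$ is binary, $v$ has exactly two children $c_1, c_2$, each of which must contain descendants in $\mathcal{C}$ (otherwise $v$ would not be the MRCA of $\mathcal{C}$). Taking $d$ to descend from $c_1$, I would pick any $a \in \mathcal{C} \cap \desc(c_1)$ and any $b \in \mathcal{C} \cap \desc(c_2)$. The species tree displayed by $\sigma$ on $\{a,b,d\}$ is then $\tau = ((a,d)\tc t, b)$, where $t$ is the sum of internal edge lengths of $\sigma$ along the path from $\MRCA(\{a,d\})$ up to $v$. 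This path contains the internal edge from $c_1$ to $v$, whose length exceeds $\epsilon$, so $t > \epsilon$.

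Next I would combine this triple with marginalization of the coalescent. If $\mathcal{C}_g$ is a clade on a full gene tree $T$, then restricting $T$ to $\{A,B,D\}$ makes $A$ and $B$ sister, since their gene-tree MRCA has no descendant outside $\mathcal{C}_g$. Standard marginalization identifies the distribution of gene trees restricted to $\{A,B,D\}$ with the coalescent distribution on $\tau$; on $\tau$, the probability that $A$ and $B$ are sister equals $(1/3)\exp(-t)$, because with probability $\exp(-t)$ the $A$ and $D$ lineages fail to coalesce in the ancestral population of $\{a,d\}$, after which exchangeability in the root population makes each of the three possible first coalescences equally likely. Hence $\mathbb P_\sigma(\mathcal{C}_g) \le (1/3)\exp(-t) < (1/3)\exp(-\epsilon)$.

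For tightness, given $k < (1/3)\exp(-\epsilon)$, I would pick $t$ with $\epsilon < t < -\ln(3k)$ and take $\sigma = ((a,b)\tc t, c)$ on $\mathcal{X} = \{a,b,c\}$ with the non-trivial non-clade $\mathcal{C} = \{a,c\}$. The same three-taxon calculation yields $\mathbb P_\sigma(\mathcal{C}_g) = (1/3)\exp(-t) > k$. The only delicate step in the whole argument is the selection of the triple $\{a,b,d\}$ and the verification that its displayed internal edge is genuinely longer than $\epsilon$; after that, everything reduces to a routine marginalization plus a standard three-taxon coalescent calculation, and that same calculation automatically shows the bound is sharp.
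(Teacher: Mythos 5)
Your argument for the main claim is essentially the paper's: both proceed by contrapositive, extracting from a non-clade $\mathcal C$ a rooted triple $a,b\in\mathcal C$, $d\notin\mathcal C$ that is \emph{not} displayed on $\psi$, observing that every gene tree with clade $\mathcal C_g$ displays $((A,B),D)$, and bounding that discordant triple probability by $(1/3)\exp(-t)<(1/3)\exp(-\epsilon)$; the paper simply cites this triple bound while you rederive it via marginalization to the displayed $3$-taxon tree, and your care in checking that the displayed internal edge contains the genuinely internal edge $(v,c_1)$ (so $t>\epsilon$) is exactly the point the paper leaves implicit.

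The one place you fall short of the statement is the sharpness claim. The theorem asks for a species tree on $\mathcal X$ --- the taxon set fixed at the outset, of arbitrary size --- whereas your counterexample lives only on a $3$-taxon set, so as written it does not produce the required $\sigma$ when $|\mathcal X|>3$. The fix is routine (and is what the paper does): for any non-trivial $\mathcal C\subset\mathcal X$, take $a\in\mathcal C$, $c\notin\mathcal C$, build a subtree $((a,c)\tc\delta,\,T_{\mathcal C'}\tc\gamma)$ with $\mathcal C'=\mathcal C\smallsetminus\{a\}$, let $\gamma$ and the branch above be large so that the $\mathcal C'_g$ lineages coalesce to one and everything resolves before reaching the rest of the tree, and take $\delta\approx\epsilon$; then $\PP_\sigma(\mathcal C_g)\to(1/3)\exp(-\epsilon)$. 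Note the paper's construction also establishes the stronger ``for every such $\mathcal C$'' version rather than the bare existence of one bad pair $(\sigma,\mathcal C)$. Your $3$-taxon computation is correct and is the right kernel of the example, but it needs to be embedded in a tree on all of $\mathcal X$ to prove what is claimed.
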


\begin{proof} If $\mathcal C$ is a trivial clade, there is nothing to show, 
   so we may assume $1< |\mathcal C_g|<|\mathcal X|$.  We
  prove the contrapositive: if $\mathcal C$ is not a clade on $\psi$,
  then $\mathbb P_{\sigma}(\mathcal C_g)<(1/3)\exp(-\epsilon)$.

  Suppose $\mathcal C$ is not a clade on the species tree,
so there exist $a,b \in \mathcal C$ and $c \in \mathcal
  X\smallsetminus\mathcal C$ such that $\psi$ does not display the rooted 
  triple $((a,b),c)$. 
   Thus, the rooted triple probability satisfies
  $\mathbb P_{\sigma}( ((A,B),C ))< (1/3)
  \exp(-\epsilon)$ \cite{nei1987}.  But then $$(1/3)
  \exp(-\epsilon)>\mathbb P_{\sigma} (((A,B),C)) \ge \mathbb
  P_{\sigma}(\mathcal C_g),$$ since $((A,B),C)$ is
  displayed on every gene tree on which $\mathcal C_g$ is a clade.
  \smallskip

To establish the last claim of the theorem, we construct an example.
For any set $\mathcal C$ with $1<|\mathcal C|<|\mathcal X|$, pick some
$a\in \mathcal C$, and some $c\in \mathcal X\smallsetminus \mathcal
C$. Let $\mathcal C'=\mathcal C\smallsetminus\{a\}$. Consider a binary
species tree $\sigma$ which has a 
subtree of the form
$((a,c)\tc\delta,T_{\mathcal C'}\tc \gamma)$, where $T_{\mathcal C'}$ is
any rooted tree on $\mathcal C'$. Note then that $\mathcal C$ is not a
clade on $\sigma$.

By taking $\gamma$ to be large, the probability that the lineages from $\mathcal
C'$ coalesce below $\MRCA(\{a,c\}\cup\mathcal C')$ can be made as
close to 1 as desired. Because the probability that lineages $A$ and $C$ fail to coalesce within time $\delta$ is $\exp(-\delta)$, by also choosing $\delta\approx\epsilon$ 
the probability that three lineages (one
for $A$, one for $C$, and one for $\mathcal C_g'$) enter the ancestral
population above this MRCA can be made as close to $\exp(-\epsilon)$
as we wish.  Thus the probability that $\mathcal C_g$ will be a clade
on a gene tree can be made as close to $(1/3)\exp(-\epsilon)$ as we
wish by taking the branch above this subtree to be long.
\end{proof}

Setting $\epsilon=0$ yields
Corollary \ref{cor:third}.

\begin{cor} \label{cor:third} Let $\sigma$ be a binary species tree on
  taxa $\mathcal X$, with positive edge lengths.  Under the
  multispecies coalescent model, suppose $\mathcal C\subset \mathcal
  X$ is such that $\mathbb P_{\sigma}(\mathcal C_g)\ge1/3.$ Then
  $\mathcal C$ is a clade on $\sigma$.

  Furthermore, this statement is no longer true for $1< |\mathcal C|<
  |\mathcal X|$ if $1/3$ is replaced with any smaller number.
\end{cor}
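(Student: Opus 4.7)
The plan is to obtain this corollary as an immediate specialization of the preceding theorem, by substituting $\epsilon = 0$. No new probabilistic arguments are required; all the real work has been done in establishing the $\epsilon$-parameterized bound and the accompanying construction.

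For the forward implication, I would observe that the hypothesis that $\sigma$ has positive edge lengths is precisely $\lambda_i > 0 = \epsilon$, so the theorem applies with $\epsilon = 0$. The probability threshold $(1/3)\exp(-\epsilon)$ reduces to $1/3$ in this case, so the hypothesis $\mathbb P_\sigma(\mathcal C_g) \geq 1/3$ triggers the theorem's conclusion and yields $\mathcal C \in \mathcal H(\sigma)$.

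For the sharpness claim, any $k < 1/3$ satisfies $k < (1/3)\exp(-0)$, so the last clause of the theorem directly produces a binary species tree $\sigma$ on $\mathcal X$ and a subset $\mathcal C \subset \mathcal X$ with $1 < |\mathcal C| < |\mathcal X|$ that is not a clade on $\sigma$ but has $\mathbb P_\sigma(\mathcal C_g) \geq k$. The theorem's construction, a subtree of the form $((a,c)\tc\delta, T_{\mathcal C'}\tc\gamma)$, tolerates the boundary case $\epsilon = 0$: choosing $\delta$ arbitrarily small and positive and $\gamma$ sufficiently large pushes the clade probability arbitrarily close to $1/3$ from below, which is enough to exceed any $k < 1/3$. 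Since the corollary is a direct specialization, there is no genuine obstacle here; one only needs to note that positivity of $\delta$, not of $\epsilon$, is what ensures $\sigma$ remains a well-defined binary species tree with positive edge lengths.
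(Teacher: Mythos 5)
Your proposal is correct and matches the paper exactly: the paper's entire proof of Corollary~\ref{cor:third} is the single remark that setting $\epsilon=0$ in the preceding theorem yields the result, which is precisely your specialization. Your additional observations about the sharpness construction at the boundary $\epsilon=0$ are accurate but already covered by the theorem's last clause as stated.
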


If the species tree is not binary, a slightly weaker result holds.

\begin{thm}
  Suppose the species tree $\sigma$ is not necessarily binary, and
  $\mathcal C\subset \mathcal X$ is such that $\mathbb
  P_{\sigma}(\mathcal C_g)>1/3$. Then $\mathcal C$ is a clade on
  $\sigma$.

 Furthermore, this statement is no longer true for $1<|\mathcal
 C|<|\mathcal X|$ if $1/3$ is replaced with any smaller number.
\end{thm}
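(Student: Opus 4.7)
My plan is to prove the contrapositive, adapting the binary-case argument to allow polytomies in $\psi$. Suppose $\mathcal{C} \subset \mathcal{X}$ with $1 < |\mathcal{C}| < |\mathcal{X}|$ is not a clade on $\sigma$; I want to show $\PP_\sigma(\mathcal{C}_g) \le 1/3$. As in the binary proof, whenever $a, b \in \mathcal{C}$ and $c \in \mathcal{X} \smallsetminus \mathcal{C}$, every gene tree on which $\mathcal{C}_g$ is a clade also displays the rooted triple $((A,B),C)$, so $\PP_\sigma(\mathcal{C}_g) \le \PP_\sigma(((A,B),C))$. Everything reduces to choosing $a, b, c$ so that this rooted triple probability is at most $1/3$.

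Let $v = \MRCA(\mathcal{C})$ and fix some $c \in \desc(v) \smallsetminus \mathcal{C}$, which exists since $\mathcal{C}$ is not a clade. Let $v_1, \dots, v_k$ (with $k \ge 2$) be the children of $v$ in $\psi$, and without loss of generality take $c \in \desc(v_1)$. I split into two cases. If $\mathcal{C} \cap \desc(v_1)$ is nonempty, pick $a$ in it and pick $b \in \mathcal{C} \cap \desc(v_j)$ for any $j \ne 1$ (which exists because $v = \MRCA(\mathcal{C})$); then $\MRCA(\{a, c\})$ is a proper descendant of $v$ while $\MRCA(\{a,b\}) = \MRCA(\{b,c\}) = v$, so $\psi$ strictly displays $((a,c),b)$ and the classical coalescent rooted-triple inequality \cite{nei1987} yields $\PP_\sigma(((A,B),C)) < 1/3$. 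Otherwise $\mathcal{C}$ meets at least two child-subtrees $\desc(v_i), \desc(v_j)$ both distinct from $\desc(v_1)$; choosing $a, b$ from these forces all three pairwise MRCAs of $\{a, b, c\}$ to equal $v$.

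The main new obstacle is this second (polytomy) case. Here the species tree induced on $\{a, b, c\}$ is a three-leaf star, and standard marginalization of the coalescent gene tree distribution reduces $\PP_\sigma(((A,B),C))$ to a computation on this induced star. Exchangeability of the three lineages $A, B, C$ entering the common ancestral population at $v$ then forces each of the three rooted triples on $\{A,B,C\}$ to have probability exactly $1/3$, so $\PP_\sigma(((A,B),C)) = 1/3$. Combining the two cases gives $\PP_\sigma(\mathcal{C}_g) \le 1/3$, which is the required contradiction to $\PP_\sigma(\mathcal{C}_g) > 1/3$.

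For the optimality clause, the polytomy case itself supplies an explicit example: taking $\sigma$ to be the three-taxon star on $\mathcal{X} = \{a, b, c\}$ makes $\mathcal{C} = \{a, b\}$ a non-trivial non-clade with $\PP_\sigma(\mathcal{C}_g) = \PP_\sigma(((A,B),C)) = 1/3$, which exceeds any $k < 1/3$. For larger $|\mathcal{X}|$, analogous constructions in which $a$, $b$, $c$ form a polytomy at the root work equally well, or one can simply reuse the binary construction from the preceding theorem with $\epsilon = 0$ and tune its parameters so that $\PP_\sigma(\mathcal{C}_g)$ exceeds any prescribed $k < 1/3$.
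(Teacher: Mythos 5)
Your proposal is correct and rests on exactly the same mechanism as the paper's proof: bounding $\PP_\sigma(\mathcal C_g)$ by a rooted-triple probability $\PP_\sigma(((A,B),C))$ and invoking the fact that a triple not displayed on the species tree has probability at most $1/3$ (with equality exactly in the star/polytomy case), together with the Corollary~\ref{cor:third} construction for optimality. The paper just runs the argument directly — choosing $a,b\in\mathcal C$ with $\MRCA(\{a,b\})=\MRCA(\mathcal C)$ once and concluding each $c\notin\mathcal C$ lies outside $\desc(\MRCA(\mathcal C))$ — whereas you take the contrapositive and make the polytomy case explicit; the content is the same.
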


\begin{proof}
To show $\mathcal C$ is a clade,  we suppose $c\in \mathcal X \smallsetminus \mathcal C$ and demonstrate that $c\notin \desc(\MRCA(\mathcal C))$.
Choose $a,b\in \mathcal C$ such that $\MRCA(\mathcal C)=\MRCA(\{a,b\})$.
Note that $\mathbb P_{\sigma}( ((A,B),C) )\ge \mathbb
P_{\sigma}(\mathcal C_g)>1/3$ implies
that the rooted triple $((a,b),c)$ is displayed on
$\sigma$. Thus $c \notin \desc(\MRCA(\mathcal C))$. 
  
That $1/3$ cannot be replaced with a smaller number is a consequence
of Corollary \ref{cor:third}. 
\end{proof}

\section{Clade invariants}\label{sec:inv}

A \emph{clade invariant} for a species tree topology  is a
polynomial in the probabilities of clades on gene trees that vanishes for all edge
length assignments to the species  tree.  
More completely, a clade invariant associated to an $n$-taxon species tree topology $\psi$ is a multivariate polynomial in $2^{n}-n-2$ indeterminates (one for every non-trivial clade) 
which evaluates to zero at any
vector of clade probabilities $\PP_\sigma(\mathcal C_g)$ arising from $\sigma=(\psi, \lambda)$, regardless of the values of $\lambda$. 

Proposition
\ref{prop:trivinv} gives an example of a clade invariant for binary gene trees
 that, in
addition, is independent of all features of $\psi$ except the number
of taxa:
$$\sum_{\mathcal C_g\subset \mathcal X_g\atop \text{non-trivial}}\mathbb P_\sigma(\mathcal C_g) - (n-2)=0.$$ 
We call this the \emph{trivial invariant}, and emphasize it is
satisfied by clade probabilities from any species tree on $\mathcal
X$.

Clade invariants can be computed for small trees using computational
algebra software, such as Singular \cite{GPS09}. For each edge length
$\lambda_i$, one sets $\Lambda_i=\exp(-\lambda_i)$, and then expresses
the clade probabilities as multivariate polynomials in the
$\Lambda_i$. Gr\"obner basis methods for variable elimination then
allow one to determine generators of the polynomial ideal of all clade
invariants.  Such computations were useful in formulating the
general construction of certain linear invariants given below.   The 
existence of these clade invariants forms the basis our proof
of species tree topology identifiability in Section \ref{sec:id}.

\begin{thm}\label{thm:cladeinv}
  Let $\mathcal A\subsetneq \mathcal X$ be a subset of taxa with at
  least two elements, and $\mathcal C\subseteq \mathcal
  X\smallsetminus\mathcal A$ a non-empty set of taxa not in $\mathcal
  A$. For distinct $a,b\in \mathcal A$, let $\mathcal A'=\mathcal
  A\smallsetminus\{a,b\}$.  Then if $\mathcal A$ is a clade on
  $\sigma$,
\begin{equation}
  \left( \sum_{\mathcal S\subseteq \mathcal A'} \PP_\sigma(\mathcal S_g\cup\{A\}\cup \mathcal C_g)\right )
  - \left ( \sum_{\mathcal S\subseteq \mathcal A'}\PP_\sigma(\mathcal S_g\cup\{B\}\cup \mathcal C_g)\right )=0.
\label{eq:cladeinv}
\end{equation}
\end{thm}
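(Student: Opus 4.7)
My plan is to condition on the coalescent process restricted to the subtree of $\sigma$ rooted at $\MRCA(\mathcal{A})$. Because $\mathcal{A}$ is a clade on $\sigma$, this subtree has leaves exactly $\mathcal{A}$, so the below-$\MRCA(\mathcal{A})$ coalescent involves only the $\mathcal{A}_g$-lineages. I will record its outcome as a random partition $\pi$ of $\mathcal{A}_g$, whose blocks are the sets of gene labels that have merged into a common ancestral lineage by the time they reach $\MRCA(\mathcal{A})$. Above $\MRCA(\mathcal{A})$, the $|\pi|$ surviving lineages participate in the remainder of the coalescent and eventually interact with the lineages sampled from $\mathcal{X} \smallsetminus \mathcal{A}$.

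Two observations drive the argument. First, a set of the form $\mathcal{V} \cup \mathcal{C}_g$ with $\mathcal{V} \subseteq \mathcal{A}_g$ can be a clade on the sampled gene tree only when $\mathcal{V}$ is a union of blocks of $\pi$; otherwise some block would straddle $\mathcal{V}$, making its representative lineage ancestral both to an in-clade and to an out-of-clade taxon. Second, because $\mathcal{C}_g$ is nonempty the MRCA of the putative clade on the gene tree must lie strictly above $\MRCA(\mathcal{A})$, and the coalescent above $\MRCA(\mathcal{A})$ treats the $|\pi|$ incoming lineages exchangeably; hence the conditional probability that the union of a specified subfamily $\mathcal{B} \subseteq \pi$ of blocks, together with $\mathcal{C}_g$, forms a clade on the gene tree depends on $\mathcal{B}$ only through $|\mathcal{B}|$.

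The proof then reduces to a case split on $\pi$. If $A$ and $B$ lie in the same block, then any $\mathcal{V} = \mathcal{S}_g \cup \{A\}$ with $\mathcal{S} \subseteq \mathcal{A}'$ fails the union-of-blocks criterion (it would be forced to contain $B$, which it does not), so the first observation makes both sides of~\eqref{eq:cladeinv} vanish conditional on such $\pi$. Otherwise $A \in B_A$ and $B \in B_B$ are distinct blocks, and the nonzero contributions to the LHS come from those $\mathcal{S}$ for which $\mathcal{V} = \mathcal{S}_g \cup \{A\}$ is a union of blocks; these $\mathcal{V}$ are in bijection with subsets $\mathcal{B} \subseteq \pi \smallsetminus \{B_A, B_B\}$ via $\mathcal{V} = B_A \cup \bigcup_{B \in \mathcal{B}} B$, and analogously on the RHS with $B_A$ replaced by $B_B$. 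Pairing each $\mathcal{B}$ with itself, both sides involve a union of $1 + |\mathcal{B}|$ blocks, so the second observation equates the two conditional clade probabilities. Summing over $\mathcal{B}$ and then integrating over $\pi$ yields~\eqref{eq:cladeinv}.

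The main obstacle will be the rigorous invocation of exchangeability at $\MRCA(\mathcal{A})$. It rests on the defining property of the multispecies coalescent that lineages cohabiting a population are interchangeable, together with the fact that, above $\MRCA(\mathcal{A})$, the future of the process depends on what happened below only through the unordered collection of surviving lineages and not on their internal coalescent structure. Combined with verifying that the event ``$\mathcal{V} \cup \mathcal{C}_g$ is a clade'' factors through the above-$\MRCA(\mathcal{A})$ portion of the gene tree once $\mathcal{V}$ is a union of blocks, this is the one point requiring care; the remainder is bookkeeping.
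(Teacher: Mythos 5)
Your proposal is correct and follows essentially the same route as the paper's proof: condition on the partition of $\mathcal A$ induced by the lineages surviving to $\MRCA(\mathcal A)$, observe that only unions of blocks (together with $\mathcal C_g$) can be gene-tree clades, split on whether $a$ and $b$ share a block, and in the separated case pair terms by swapping the block containing $a$ for the block containing $b$ while invoking exchangeability of lineages above $\MRCA(\mathcal A)$. Your exchangeability claim (dependence only on the number of blocks) is slightly stronger than the two-lineage swap the paper uses, but both are valid and the arguments are otherwise identical.
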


\smallskip

 We note that this theorem applies to any species tree, including
 non-binary ones.   Moreover, since a non-binary species tree $\sigma$ can
  be thought of as any of its binary resolutions with length 0
  assigned to any introduced edges, the clade probabilities arising
  from such a $\sigma$ will satisfy the polynomials of the theorem for every
  binary resolution. Thus in the statement of the theorem the phrase
  `if $\mathcal A$ is a clade on 
  $\sigma$' 
 can be replaced with `if
  $\mathcal A$ is a clade on a binary resolution of 
  $\sigma$.'
 
\smallskip

For the proof, it is useful to have the notion of compatible clades:
\begin{defn}
  Two clades, $\mathcal A_g$ and $\mathcal B_g$ are \emph{compatible}
  if $\mathcal A_g \cap \mathcal B_g = \emptyset$, $\mathcal
  A_g\subseteq \mathcal B_g$, or $\mathcal B_g \subseteq \mathcal
  A_g$.
\end{defn}

 If a  clade $\mathcal A_g$ is on a gene
tree $T$, then all other clades appearing on $T$ must be compatible with $\mathcal A_g$.

\smallskip

The proof of Theorem \ref{thm:cladeinv} uses partitions of subsets of
the taxon set $\mathcal X$ that occur as follows: Consider an internal
node $v$ of $\sigma$, and let $\mathcal A=\desc(v)$. Then in a
realization of the coalescent process on $\sigma$, some of the
lineages of genes in $\mathcal A_g$ may coalesce below $v$, so that
there are $|\mathcal A|$ or fewer lineages at $v$. Each such lineage
determines a subset of $\mathcal A_g$, namely its descendants, and
hence the set of lineages determines a partition of $\mathcal A$. 

As an example, consider the species
tree in Fig.~\ref{F:partition}.  For the set $\mathcal A =
\{a,b,c\}$, the partition at $\MRCA(\mathcal A)$ in both subfigures is
$\{\{a\}, \{b\}, \{c\} \}$.  Note that the partition of such a set
$\mathcal A$ is not affected by any coalescent events occurring in
the MRCA population, but only by those below.  
The only
other partition of $\mathcal A$ possible for this species tree is $\{
\{a,b\}, \{c\}\}$.  For the set $\mathcal A = \{a,b,c,d\}$, the
partition at $\MRCA(\mathcal A)$ in Fig.~\ref{F:partition}a is $\{ \{a\}, \{b,c\},
\{d\}\}$, and 
in Fig.~\ref{F:partition}b is $\{\{a,b,c\}, \{d\}\}$.

\smallskip

\begin{proof}[Proof of Theorem \ref{thm:cladeinv}] 

  Suppose $\mathcal A$ is a clade on
  $\psi$, with $v= \MRCA(\mathcal A)$.  Letting
  $\pi(\mathcal{A}) = \mathcal A_1 | \cdots | \mathcal A_k$ denote a
  partition of $\mathcal A$, we also use $\pi(\mathcal A)$ to denote
  the event that the coalescent process on $\sigma$ produces lineages
  at $v$ defining this partition. 
  
   We will condition on this event: Specifically, recalling the notion
   of a coalescent history from \cite{degnan2005},    \begin{equation}\label{eq:joint}
  \PP_{\sigma} (\pi (\mathcal A)) = \sum_T \sum_{\text{history } h_T,
    \atop{h_T \text{ consistent} \atop{\text{with }
        \pi(\mathcal A)}}} \PP_{\sigma} (T, h_T).
  \end{equation}
  For
$\mathcal B \subset \mathcal X$ the joint probability $\PP_{\sigma} (\mathcal B_g, \pi (\mathcal A) )$  
is computed similarly, by restricting the outer sum on the right side of Eq.~\eqref{eq:joint} to those gene trees that have clade $\mathcal B_g$.
 Then  
     $$\PP_{\sigma} (\mathcal B_g \, | \, \pi (\mathcal A) )= \frac{\PP_{\sigma} (\mathcal B_g, \pi (\mathcal A) )}{\PP_{\sigma} (\pi (\mathcal A)) },$$ 
       and by the law of total probability, we have the clade probability
 \begin{align*}
  \PP_{\sigma}(\mathcal B_g) &= \sum_{\pi(\mathcal{A})} \PP_{\sigma}(\mathcal B_g \, | \, \pi(\mathcal A)) \, \PP_{\sigma}(\pi(\mathcal A)).
\end{align*}
Thus, to establish Eq.~\eqref{eq:cladeinv}, it is enough to show
that 
\begin{equation}
  \left( \sum_{\mathcal S\subseteq \mathcal A'} 
  \PP_\sigma\left(\mathcal S_g\cup\{A\}\cup \mathcal C_g \, | \, \pi(\mathcal A)\right)\right )
  - \left ( \sum_{\mathcal S\subseteq \mathcal A'}
  \PP_\sigma\left(\mathcal S_g\cup\{B\}\cup \mathcal C_g \, | \, \pi(\mathcal A)\right)\right )=0
\label{eq:cladeinv22}
\end{equation}
holds for all choices of partition $\pi(\mathcal A)$.

To establish Eq.~\eqref{eq:cladeinv22}, we show that
non-zero terms cancel pairwise. However, which terms cancel 
depends on the partition, so for the remainder of the
argument we fix $\pi(\mathcal A)$,
and assume the partition sets are indexed so that $a\in \mathcal A_1$. 

Note first that if $b\in\mathcal A_1$ as well, then we are
conditioning on an event that requires that the $A$ and $B$ lineages
have coalesced into one below $v$.  Thus, any clade on
a gene tree that includes $A$ and $\mathcal C_g$ must include $B$,
because we have assumed that $\mathcal C$ is non-empty.  Similarly, any
clade that includes $B$ and $\mathcal C_g$ must include $A$.
Therefore, all probabilities in Eq.~\eqref{eq:cladeinv22} are
zero, so the equation holds.

Otherwise, assume $b\in \mathcal A_2$.  We wish to give a bijective
correspondence between non-zero clade probabilities in the first sum
in Eq.~\eqref{eq:cladeinv22} and equal clade probabilities in the
second sum, with the correspondence dependent on the partition 
$\pi (\mathcal A)$.
That is, we wish to show that for each $\mathcal S_1 \subset \mathcal
A^{\prime}$, there is a corresponding $\mathcal S_2 \subset \mathcal
A^{\prime}$ such that
\begin{equation}\label{eq:equality}
  \mathbb P_{\sigma}[(\mathcal S_1)_g \cup \{A\} \cup \mathcal C_g \, | \, 
  \pi(\mathcal A)] = \mathbb P_{\sigma}[(\mathcal S_2)_g \cup \{B\} \cup \mathcal C_g 
  \, | \, \pi(\mathcal A)].
\end{equation}
Consider first the case when 
$(\mathcal S_1)_g\cup \{A\}\cup \mathcal C_g$ is compatible
with the clades $(\mathcal A_1)_g, \dots, (\mathcal A_k)_g$.
Because $\mathcal C$ is non-empty,
this occurs exactly when
$\mathcal S_1\cup\{a\}$ is the union of some of the $\mathcal A_i$.  Thus we have
$$
\mathcal S_1\cup\{a\}=\mathcal A_1\sqcup \bigsqcup_j \mathcal A_{i_j},
$$
for some $i_j$, with all unions here disjoint. Moreover, since
$b\notin \mathcal S_1\cup\{a\}$, $\mathcal A_2$ does not appear in
this expression.  We therefore define $\mathcal S_2$ by the
expresssion of disjoint unions
$$
\mathcal S_2\cup\{b\}=\mathcal A_2\sqcup \bigsqcup_j \mathcal A_{i_j}.
$$
Eq.~\eqref{eq:equality} then holds, since for the coalescent
process on $\sigma$ above $v$ the lineages corresponding to $\mathcal
A_1$ and $\mathcal A_2$ are exchangeable.
This gives us a bijection between $\mathcal S_1 \subset \mathcal A'$ and 
$\mathcal S_2 \subset \mathcal A'$ for which either (and hence both) of
the probabilities in Eq.~\eqref{eq:equality} are non-zero.

For all other $\mathcal S_1$, $\mathcal S_2$, the sets $\mathcal S_1 \cup \{A\} \cup \mathcal C_g$
and $\mathcal S_2 \cup \{B\} \cup \mathcal C_g$ are not compatible with
$\pi (\mathcal A)$, and hence these probabilities are zero.
\end{proof}

As a simple corollary, we immediately obtain what we call `cherry-swapping'
invariants, 
which express that
the probability of any clade containing exactly one
taxon of a 2-clade on the species tree is unchanged when that taxon is
swapped out for the other taxon in the 2-clade.  

\begin{cor} (Cherry-swapping invariants)\label{cor:cherryswap} Suppose
  $\{a,b\}$ is a 2-clade on a species tree with taxa $\mathcal
  X$. Then for any $\mathcal C\subseteq \mathcal
  X\smallsetminus\{a,b\}$,
$$\PP_\sigma(\{A\}\cup \mathcal C_g)-\PP_\sigma(\{B\}\cup \mathcal C_g)=0.$$
\end{cor}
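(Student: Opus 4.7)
The plan is to apply Theorem \ref{thm:cladeinv} directly with the smallest possible choice of $\mathcal A$, namely $\mathcal A=\{a,b\}$ itself. By hypothesis $\{a,b\}$ is a 2-clade on $\sigma$, so this $\mathcal A$ is a clade on $\sigma$ with $|\mathcal A|=2 \ge 2$, meeting the hypotheses of the theorem (for any non-empty $\mathcal C\subseteq \mathcal X\smallsetminus\{a,b\}$). With this choice, $\mathcal A'=\mathcal A\smallsetminus\{a,b\}=\emptyset$, so the only subset $\mathcal S\subseteq \mathcal A'$ is $\mathcal S=\emptyset$. Each sum in Eq.~\eqref{eq:cladeinv} therefore collapses to a single term, yielding exactly
$$\PP_\sigma(\{A\}\cup \mathcal C_g)-\PP_\sigma(\{B\}\cup \mathcal C_g)=0,$$
as desired.

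The only subtlety is that Theorem \ref{thm:cladeinv} requires $\mathcal C$ to be non-empty, whereas the corollary as stated permits $\mathcal C=\emptyset$. This boundary case is handled separately and trivially: when $\mathcal C=\emptyset$, the sets $\{A\}\cup\mathcal C_g=\{A\}$ and $\{B\}\cup\mathcal C_g=\{B\}$ are singletons, hence trivial clades on every gene tree, so both probabilities equal $1$ and their difference vanishes.

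There is no real obstacle here; the corollary is simply the degenerate specialization of the theorem at $|\mathcal A|=2$, together with a one-line observation about trivial clades. I would write it in two sentences: first the reduction to the $\mathcal A=\{a,b\}$ case of Theorem \ref{thm:cladeinv}, then the trivial-clade remark covering $\mathcal C=\emptyset$.
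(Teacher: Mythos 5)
Your proof is correct and follows exactly the route the paper intends: the corollary is stated as an immediate specialization of Theorem \ref{thm:cladeinv} with $\mathcal A=\{a,b\}$, so that $\mathcal A'=\emptyset$ and each sum in Eq.~\eqref{eq:cladeinv} reduces to a single term. Your extra remark disposing of the $\mathcal C=\emptyset$ case (where both clades are trivial singletons of probability $1$) is a careful touch the paper leaves implicit, and is handled correctly.
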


\medskip

To illustrate Theorem \ref{thm:cladeinv},
we consider next all species tree topologies on $5$ or fewer taxa, and discuss
invariants produced by this construction. 
For notational ease, we denote gene tree clades
by juxtaposition of labels, rather than by sets, so,  for instance
$\{A,B,D,E\}$ will be denoted $ABDE$.   Our focus is
on those invariants associated to $3$- and $4$-clades, and
we do not explicitly list cherry-swapping invariants except for the 3-taxon tree.

\begin{example}
For the species  tree topology $\psi=((a,b),c)$, the cherry-swapping invariant,
$$\PP_\sigma(AC)-\PP_\sigma(BC)=0,$$
is the only one produced by Theorem \ref{thm:cladeinv}.
\end{example}

\begin{example} For the $4$-taxon caterpillar tree topology $\psi = (((a,b),c),d)$, in
  addition to the three cherry-swapping invariants, we find for
  $\mathcal A=\{a,b,c\}$ the invariants
\begin{align*}
&\left( \PP_\sigma(AD)+\PP_\sigma(ABD)\right )-\left( \PP_\sigma(CD)+\PP_\sigma(BCD)\right )=0, \ (\text{for }\mathcal A' = \{b\})\\
&\left( \PP_\sigma(BD)+\PP_\sigma(ABD)\right )-\left( \PP_\sigma(CD)+\PP_\sigma(ACD)\right )=0, \ (\text{for }\mathcal A' = \{a\})\\
&\left( \PP_\sigma(AD)+\PP_\sigma(ACD)\right )-\left( \PP_\sigma(BD)+\PP_\sigma(BCD)\right )=0, \ (\text{for }\mathcal A' = \{c\}).
\end{align*}
We note that there are relations between these: the second invariant
is obtained from the first by a cherry-swapping move, and the third 
is the sum of two cherry-swapping
invariants.

For the 4-taxon balanced tree topology, $\psi = (((a,b),(c,d))$, only the six
cherry-swapping invariants are obtained.
\end{example}

\begin{example}
  
If $\psi$ is either the $5$-taxon caterpillar tree topology $((((a,b),c),d),e)$, 
  or the balanced tree topology $(((a,b),c),(d,e))$, consider $\mathcal A=\{a,b,c\}$.  
  
  Then for $\mathcal A' = \{b\}$, we obtain
  for various choices of $\mathcal C$,
\begin{equation}\label{E:5cat1}
\left( \PP_\sigma(AD) +\PP_\sigma(ABD)\right )-\left ( \PP_\sigma(CD) + \PP_\sigma(BCD) \right )=0,
\end{equation}
\begin{equation}\label{E:5cat2}
\left( \PP_\sigma(AE) +\PP_\sigma(ABE)\right )-\left ( \PP_\sigma(CE) + \PP_\sigma(BCE) \right )=0,
\end{equation}
\begin{equation}\label{E:5cat3}
\left( \PP_\sigma(ADE) +\PP_\sigma(ABDE)\right )-\left ( \PP_\sigma(CDE) + \PP_\sigma(BCDE) \right )=0.
\end{equation}

Note that for the balanced species tree, Eq.~\eqref{E:5cat2} follows from Eq.~\eqref{E:5cat1} by cherry swapping $D$ and $E$.
However, for the caterpillar species tree, Eqs.~\eqref{E:5cat1} and \eqref{E:5cat2} are not related by a cherry swap.

For $\mathcal A' = \{a\}$ with singleton $\mathcal C$ we obtain
Eqs.~\eqref{E:5cat1}--\eqref{E:5cat3}
again, up to cherry-swapping lineages $A$ and $B$.

For $\mathcal A' = \{c\}$ with singleton $\mathcal C$, we obtain
invariants such as
\begin{equation}\label{E:5cat4}
\left( \PP_\sigma(AD) +\PP_\sigma(ACD)\right )-\left ( \PP_\sigma(BD) + \PP_\sigma(BCD) \right )=0,
\end{equation}
but Eq.~\eqref{E:5cat4} is simply the sum of two cherry-swapping invariants for the
cherry $\{a,b\}$, with $\mathcal C=\{d\}$ and $\{c,d\}$. 
In general,
if the taxa in $\mathcal A\smallsetminus A'$ span a smaller clade
than $\mathcal A$, the invariant produced will be a sum of invariants
for the smaller clade. Indeed, this phenomenon occurred above, for the 4-taxon caterpillar.

\end{example}

\begin{example}
  For the $5$-taxon caterpillar topology $\psi = ((((a,b),c),d),e)$, taking $\mathcal A=\{a,b,c,d\}$ 
  and using $\mathcal A' = \{b,c\}$ and 
  $\mathcal A' = \{a, b\}$,  we obtain two
  invariants:
\begin{multline*}\left( \PP_\sigma(AE) +\PP_\sigma(ABE)+\PP_\sigma(ACE)+\PP_\sigma(ABCE)\right )\\
-\left( \PP_\sigma(DE) +\PP_\sigma(BDE)+\PP_\sigma(CDE)+\PP_\sigma(BCDE)\right )=0, \end{multline*}
and
\begin{multline*}\left( \PP_\sigma(CE) +\PP_\sigma(ACE)+\PP_\sigma(BCE)+\PP_\sigma(ABCE)\right )\\
-\left( \PP_\sigma(DE) +\PP_\sigma(ADE)+\PP_\sigma(BDE)+\PP_\sigma(ABDE)\right )=0.\end{multline*}
Other choices of $\mathcal A'$ 
give only invariants in the space spanned by those previously discussed.
\end{example}

\begin{example}
For the $5$-taxon pseudo-caterpillar tree topology 
$\psi = (((a,b),(d,e)),c)$, taking $\mathcal A=\{a,b,d,e\}$ we obtain
\begin{multline}\label{E:ps}
\left( \PP_\sigma(AC) +\PP_\sigma(ABC)+\PP_\sigma(ACE)+\PP_\sigma(ABCE)\right )\\
-\left( \PP_\sigma(CD) +\PP_\sigma(BCD)+\PP_\sigma(CDE)+\PP_\sigma(BCDE)\right )=0,
\end{multline}
and three other invariants that can also be obtained by cherry
swapping from Eq.~\eqref{E:ps}.  Since $\PP_\sigma(ACE)=\PP_\sigma(BCD)$ by
cherry swapping, two of the eight terms can be cancelled.
\end{example}

\begin{rem}
  All the linear invariants above for 3-, 4-, and 5-taxon trees are, of course,
  among those that can be found computationally. Gr\"obner basis
  calculations do not necessarily produce exactly these, but by
  cherry-swapping and taking suitable linear combinations of computed
  linear invariants, all of these appear.  However, at least for trees
  on four and five taxa, there are additional linear invariants beyond the
  ones of Theorem \ref{thm:cladeinv}.  We give these in
  Appendix \ref{app:moreinv}, as it would be interesting to have
  non-computational means of obtaining them, as well as the higher degree invariants.

 \end{rem}

\section{Identifying  clades}\label{sec:id}

Suppose we are given the clade probabilities $\{ \PP_\sigma( \mathcal C_g)\} $
arising from the multispecies coalescent on an unknown species tree
$\sigma$, and we wish to know if $\sigma$ displays a particular
clade. By the results of Section \ref{sec:high}, high probability
may identify some clades on $\sigma$. However, it remains to be
seen how one might identify clades on $\sigma$ that have lower
probability of occurring on gene trees as a result of high levels of incomplete lineage sorting.

From Section \ref{sec:inv} we know that if $\mathcal A$ is a clade on
$\sigma$ then for every non-empty subset $\mathcal C\subseteq \mathcal
X\smallsetminus\mathcal A$, and every $a,b\in \mathcal A$, the linear
invariant associated to $\mathcal A$, $\mathcal C$, $a$, and $b$ 
vanishes. For these invariants to be useful for identifying clades,
however, we must also know that if $\sigma$ does not display the clade
$\mathcal A$, then one of these invariants 
does
not vanish.

\begin{lem}\label{prop:taxonSwapFromClade}
  Suppose $\mathcal A$ is a non-trivial clade on a species tree $\sigma$, and $a
  \in \mathcal A$ and $b \in \mathcal X\smallsetminus \mathcal A$.
  Let $\mathcal{B}$ denote the set obtained by replacing $a$ with $b$
  in $\mathcal A$, that is, $\mathcal{B} = \big( \mathcal A
  \smallsetminus \{a\} \big) \cup \{b\}$.  Then $\PP_\sigma ( \mathcal
  A_g) > \PP_\sigma( \mathcal B_g)$.
\end{lem}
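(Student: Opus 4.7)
The plan is to condition on the partition of lineages at $u := \MRCA(\mathcal B) = \MRCA(\mathcal A\cup\{b\})$ induced by the coalescent below $u$, and then exploit the exchangeability of lineages in the common population above $u$, in the spirit of the proof of Theorem \ref{thm:cladeinv}. Because $\mathcal A$ is a non-trivial clade on $\sigma$, its MRCA $v := \MRCA(\mathcal A)$ lies strictly below $u$, and the segment of $\sigma$ between $v$ and $u$ has positive length.

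Let $\pi$ denote the partition of $\desc(u)$ determined by the coalescent lineages present at $u$, and write $\PP_\sigma(\mathcal A_g)$ and $\PP_\sigma(\mathcal B_g)$ as conditional sums over $\pi$ via the law of total probability. Since $\mathcal A$ is a clade on $\sigma$, no block of $\pi$ can straddle $\mathcal A$ and $\desc(u)\smallsetminus\mathcal A$: lineages in the subtree rooted at $v$ cannot coalesce with lineages outside it before reaching $u$. Hence every block of $\pi$ lies entirely in $\mathcal A$ or entirely in $\desc(u)\smallsetminus\mathcal A$.

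The crux is the following. For $\mathcal B_g$ to be a clade on the gene tree, $\{a\}$ and $\{b\}$ must each be singleton blocks of $\pi$; otherwise the block containing $a$ (respectively $b$) straddles $\mathcal B$ and its complement, and $\mathcal B_g$ can never form as a clade above $u$. When $\pi$ does satisfy this singleton condition, the collection of $\mathcal B$-lineages at $u$ is obtained from the collection of $\mathcal A$-lineages at $u$ by swapping the two singletons $\{a\}$ and $\{b\}$. Since these two singletons share a common population at $u$, this swap is a symmetry of the conditional coalescent distribution above $u$, so $\PP_\sigma(\mathcal A_g\mid\pi) = \PP_\sigma(\mathcal B_g\mid\pi)$. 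When $\pi$ fails the singleton condition, $\PP_\sigma(\mathcal B_g\mid\pi) = 0$ while $\PP_\sigma(\mathcal A_g\mid\pi)\ge 0$. Summing over $\pi$ yields $\PP_\sigma(\mathcal A_g)\ge \PP_\sigma(\mathcal B_g)$.

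To promote this to a strict inequality I would exhibit one $\pi^*$ with $\PP_\sigma(\pi^*) > 0$, $\PP_\sigma(\mathcal A_g\mid\pi^*) > 0$, and $\{a\}$ \emph{not} a singleton block of $\pi^*$. Take $\pi^*$ to be the partition in which all $|\mathcal A|\ge 2$ lineages of $\mathcal A_g$ have coalesced into a single block $\mathcal A$ before reaching $u$; the positive-length segment of $\sigma$ between $v$ and $u$ permits this with positive probability, and given $\pi^*$ the unique $\mathcal A$-lineage at $u$ is already the MRCA of $\mathcal A_g$ in the gene tree, so $\PP_\sigma(\mathcal A_g\mid\pi^*) = 1$ while $\PP_\sigma(\mathcal B_g\mid\pi^*) = 0$. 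The main obstacle is articulating the exchangeability carefully: one must verify that the label swap $\{a\}\leftrightarrow\{b\}$ at $u$ is a bijective symmetry of the entire coalescent process on $\sigma$ above $u$, even as new lineages from sister subtrees enter at higher species-tree nodes. This reduces to the standard fact that within any single population the coalescent treats all currently present lineages identically, together with the observation that the two relabeled singletons lie in the same population at $u$ and therefore share all future populations.
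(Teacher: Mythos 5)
Your overall strategy---condition on the lineage configuration at $u=\MRCA(\mathcal A\cup\{b\})$ and swap the $a$- and $b$-lineages by exchangeability above $u$---is the same one the paper uses (the paper phrases it via the events ``$A$, resp.\ $B$, first coalesces above $u$'' rather than via the full partition), but your execution contains a genuine gap. The claim that no block of $\pi$ can straddle $\mathcal A$ and $\desc(u)\smallsetminus\mathcal A$ is false whenever the species tree has nodes strictly between $v=\MRCA(\mathcal A)$ and $u$: after passing $v$, a lineage ancestral to part of $\mathcal A_g$ travels through those intermediate populations on its way to $u$ and can coalesce there with lineages from $\desc(u)\smallsetminus(\mathcal A\cup\{b\})$. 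Consequently your key assertion---that $\mathcal B_g$ can only form when $\{a\}$ is a singleton block of $\pi$---also fails: compatibility with $\mathcal B_g$ only forces the block containing $a$ to be \emph{disjoint from} $\mathcal B$, not to be a singleton. Concretely, take $\sigma=(((a,b)\tc x,c)\tc y,d)$ with $\mathcal A=\{a,b\}$ and swap $a$ for $d$, so $\mathcal B=\{b,d\}$ and $u$ is the root. The partition $\{a,c\}\,|\,\{b\}\,|\,\{d\}$ occurs with positive probability ($A$ and $C$ coalesce on the internal edge of length $y$), and given it $\PP_\sigma(\mathcal B_g\mid\pi)=1/3$ while $\PP_\sigma(\mathcal A_g\mid\pi)=0$, since the gene-tree clade $\{A,C\}$ is incompatible with $\{A,B\}$. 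So the termwise domination $\PP_\sigma(\mathcal A_g\mid\pi)\ge\PP_\sigma(\mathcal B_g\mid\pi)$ on which your summation rests is false for such $\pi$, and the argument does not close. (Your proof is fine in the special case that $v$ is a child of $u$, but that is not the general situation.)

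The lemma itself survives in this example---the appendix table gives $\PP_\sigma(AB)-\PP_\sigma(BD)=1-\tfrac23X-\tfrac16XY-\tfrac16XY^3>0$ for $X,Y\in(0,1)$---because the mass $\mathcal B_g$ gains from partitions in which the $a$-lineage has absorbed outsiders is more than offset by what $\mathcal A_g$ gains elsewhere; a partition-by-partition pairing cannot see this, so any repair must match events across \emph{different} configurations at $u$ (e.g., pairing a realization in which the $a$-lineage has absorbed a set $S$ disjoint from $\mathcal B$ with one in which it has not). Be aware that the paper's own final step, the equality $\PP_\sigma(\mathcal B_g,\ A\text{ and }B\text{ coalesce above }u)=\PP_\sigma(\mathcal B_g)$, turns on exactly the same delicate point---it requires that $\mathcal B_g$ being a clade force the $A$ lineage to reach $u$ uncoalesced, which the example above also calls into question---so this is the step to resolve with care rather than to gloss.
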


\begin{proof}
  Let $v=\MRCA(\mathcal A \cup \{b\})$ on $\sigma$. Let $\mathcal
  A'=\mathcal A\smallsetminus\{a\}$, so $\mathcal A=\{a\}\cup \mathcal A'$
  and $\mathcal B=\{b\}\cup \mathcal A'$.

  Then, using phrases such as `$A$ coalesces above $v$' to mean the
  lineage of $A$ first coalesces with any other gene lineage in a
  population in the species tree above the node $v$,
  \begin{align}
 \PP_{\sigma}(\mathcal A_g) &= \PP_\sigma(\{A\}\cup \mathcal A'_g) \\
&= \PP_\sigma(\{A\}\cup  \mathcal A_g' \text{ and $A$ coalesces below $v$})\, +
\\ &\ \ \ \ \ \ \ \ \ \ \ \ \ \ \ \
 \PP_\sigma(\{A\}\cup  \mathcal A_g' \text{ and $A$ coalesces above $v$}) \notag\\
 &>\PP_\sigma(\{A\}\cup  \mathcal A_g' \text{ and $A$ coalesces above $v$})\\
 &\ge \PP_\sigma(\{A\}\cup  \mathcal A_g' \text{,  $A$ coalesces above $v$, and $B$ coalesces above $v$})\label{eq:acoal}\\
&= \PP_\sigma(\{B\}\cup  \mathcal A_g' \text{, $A$ coalesces above $v$, and $B$ coalesces above $v$})\label{eq:bcoal}\\
&= \PP_\sigma(\{B\}\cup  \mathcal A_g')\\
&= \PP_{\sigma}( \mathcal B_g).
\end{align}
The equality between lines \eqref{eq:acoal} and \eqref{eq:bcoal} is
due to exchangeability of lineages; given any sequence of coalescences in
 the event of line  \eqref{eq:acoal}, there is an equally probable sequence of coalescences in
the event of line \eqref{eq:bcoal} in which $B$ coalesces
in $A$'s place to form $\{B\}\cup\mathcal A_g'$ instead of $\{A\}\cup\mathcal A_g'$ .
Thus, $\PP_\sigma(\mathcal A_g) >\PP_\sigma( \mathcal B_g).$
\end{proof}

\begin{rem}
One might wish to extend the above result to sets obtained by 
replacing $k$ elements in a clade with $k$ elements outside it.
However, simple examples show that this is impossible. For instance,
if $\sigma=((a,b)\tc x,c)\tc y,(d,e)\tc z)$ where $z$ is large and both $x$ and $y$ are small,
then one can have $\PP_{\sigma}(ABC)<\PP_{\sigma}(CDE)$.  For 
example, if $(x,y,z) = (0.05,0.05,2.0)$, then the highest probability 
clades are $DE$, $AB$, $AC$, $BC$, $CDE$, and $ABC$ with probabilities 
$0.889$, $0.269$, $0.220$, $0.220$, $0.194$, and $0.188$, respectively 
(computed by {\tt COAL} \cite{degnan2005}).  Thus for these branch lengths, 
we have $\PP_{\sigma}(ABC) < \PP_{\sigma}(CDE)$, and the greedy 
strategy of accepting the most probable clades one-at-a-time returns the 
non-matching tree $((a,b),(c,(d,e)))$.

The same example shows that for a set $\mathcal C \subset \mathcal X$ there can exist $y \in \mathcal C$ such that for all $x\in \mathcal X\smallsetminus \mathcal C$, $\PP_{\sigma}(\mathcal C_g) > \PP_\sigma( (\mathcal C\smallsetminus\{y\}) \cup \{x\})$ and yet $\mathcal C$ is not a clade.  In this example, $\{c,d,e\}$ is not a clade on the species tree, yet $CDE$ is more probable than $ADE$ or $BDE$ on gene trees.
\end{rem}

Lemma \ref{prop:taxonSwapFromClade} allows us to show that if all
cherry swapping invariants are satisfied for a particular candidate
2-clade, it is in fact a 2-clade on the species tree.

\begin{prop}\label{prop:2cladeSwapping}
  (Clade probabilities determine species 2-clades.) For
  $\sigma=(\psi,\lambda)$ an $n$-taxon binary species tree on a set of
  taxa $\mathcal X$, the $2$-clades of $\psi$ are identifiable from
  clade probabilities.  In particular, for any $a, b \in \mathcal X$,
  $\{a,b\}$ is a clade on $\psi$ if, and only if, for every $\mathcal
  D \subseteq \mathcal X \smallsetminus \{a,b\}$, $\PP_\sigma (\{A\}\cup
  \mathcal D_g) = \PP_\sigma(\{B\}\cup \mathcal D_g)$.
\end{prop}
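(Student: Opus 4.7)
The plan is to prove the two directions separately, observing that one is essentially free from earlier results and the other is a short contrapositive argument driven by Lemma~\ref{prop:taxonSwapFromClade}. The forward direction is immediate: if $\{a,b\}$ is a 2-clade on $\psi$, then Corollary~\ref{cor:cherryswap} (the cherry-swapping invariants) gives $\PP_\sigma(\{A\} \cup \mathcal D_g) = \PP_\sigma(\{B\} \cup \mathcal D_g)$ for every $\mathcal D \subseteq \mathcal X \smallsetminus \{a,b\}$.

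For the converse I would argue by contrapositive: assuming $\{a,b\}$ is not a clade on $\psi$, I need to exhibit a specific $\mathcal D$ for which the equality fails. Let $v = \MRCA(\{a,b\})$ in $\psi$. Because $\psi$ is binary, $v$ has exactly two children, one spanning a clade $\mathcal A \ni a$ and the other a clade $\mathcal B \ni b$, with $\mathcal A \cap \mathcal B = \emptyset$. Since $\{a,b\}$ is assumed not to be a clade, $\mathcal A \cup \mathcal B \supsetneq \{a,b\}$, so at least one of $\mathcal A, \mathcal B$ properly contains its distinguished element; without loss of generality take $|\mathcal A| \geq 2$.

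Now set $\mathcal D = \mathcal A \smallsetminus \{a\}$. Since $\mathcal A$ and $\mathcal B$ are disjoint and $b \in \mathcal B$, we have $\mathcal D \subseteq \mathcal X \smallsetminus \{a,b\}$, and moreover $\{a\} \cup \mathcal D = \mathcal A$ is a non-trivial clade on $\sigma$ (the clade below one child of $v$, which is a proper subset of $\mathcal X$ since $b \notin \mathcal A$). Applying Lemma~\ref{prop:taxonSwapFromClade} to the clade $\mathcal A$ with $a \in \mathcal A$ and $b \in \mathcal X \smallsetminus \mathcal A$ yields
$$\PP_\sigma(\{A\} \cup \mathcal D_g) \;=\; \PP_\sigma(\mathcal A_g) \;>\; \PP_\sigma((\{b\} \cup \mathcal D)_g) \;=\; \PP_\sigma(\{B\} \cup \mathcal D_g),$$
so this $\mathcal D$ is the desired witness to failure of cherry-swapping equality. (In the symmetric case $|\mathcal B| \geq 2$, take $\mathcal D = \mathcal B \smallsetminus \{b\}$ instead and swap the roles of $a,b$ in the lemma.)

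The real content of this proposition sits in Lemma~\ref{prop:taxonSwapFromClade}, which is already proved; once that strict inequality is in hand, what remains is purely combinatorial: selecting a candidate clade to which the lemma can be applied. The binary hypothesis on $\psi$ is what makes this clean, because it guarantees that $\MRCA(\{a,b\})$ splits into exactly two disjoint subclades and hence, if $\{a,b\}$ itself fails to be a clade, at least one of those subclades is a strict superset of $\{a\}$ or $\{b\}$ and serves as the needed target for the lemma.
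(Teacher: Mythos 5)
Your proof is correct and follows essentially the same route as the paper: the forward direction via Corollary~\ref{cor:cherryswap}, and the converse by locating a non-trivial clade containing exactly one of $a,b$ (which the paper asserts exists by binarity and you justify explicitly via the children of $\MRCA(\{a,b\})$) and applying Lemma~\ref{prop:taxonSwapFromClade} to it. Your version merely spells out in more detail why such a clade exists; the argument is otherwise identical.
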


\begin{proof}
  If $\{a,b\}$ is a clade on $\sigma$, then by Corollary
  \ref{cor:cherryswap}, $\PP(\{A\}\cup \mathcal D_g)= \PP(\{B\}\cup
  \mathcal D_g)$ for any taxon set $\mathcal D$ not containing $a$ or
  $b$.

  Suppose now that $\{a,b\}$ is not a clade on $\psi$.  Then, because
  $\sigma$ is binary, at least one of $a$ or $b$ (let us say $a$) is in a
  non-trivial clade $\mathcal C$ on $\psi$ that excludes the other.
  Let $\mathcal D=\mathcal C\smallsetminus\{a\}$.

  By Lemma \ref{prop:taxonSwapFromClade}, $\PP_\sigma(\{A\}\cup
  \mathcal D_g)\ne \PP_\sigma(\{B\}\cup \mathcal D_g)$.
\end{proof}

For clades of more than two taxa on a species trees, we obtain a
slightly weaker result: As long as the edge length vector $\lambda$ does not lie in a set of measure zero, then the clades on
the species tree can be identified. The first step toward this result
is the following.

\begin{lem}\label{lem:nonzero}
  Let $\psi$ be a species tree topology on $\mathcal X$, and $\mathcal
  X=\mathcal A\sqcup \mathcal D$ a disjoint union of non-empty
  subsets with $|\mathcal A| \ge 2$. Then if $\mathcal A$ is not a clade on $\psi$ and
  $\MRCA(\mathcal A)$ is a binary node, then there exists some
  $\mathcal C\subseteq\mathcal D$, $a,b\in\mathcal A$, and some choice
  of edge lengths $\lambda$ such that the corresponding clade
  invariant of Theorem \ref{thm:cladeinv} does not vanish on the clade
  probabilities arising under the multispecies coalescent on
  $\sigma=(\psi,\lambda)$.
\end{lem}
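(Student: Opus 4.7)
The plan is to exhibit specific $a, b \in \mathcal A$ and $\mathcal C \subseteq \mathcal D$ and show that the associated clade invariant of Theorem \ref{thm:cladeinv}, viewed as a polynomial $I(\Lambda)$ in $\Lambda_i = \exp(-\lambda_i)$, has non-zero limit as all $\lambda_i \to \infty$. Let $v = \MRCA(\mathcal A)$, with binary children $v_1, v_2$ and $D_i = \desc(v_i)$. By definition of the MRCA, $\mathcal A \cap D_i \ne \emptyset$ for $i=1,2$, and since $\mathcal A$ is not a clade, $\desc(v) = D_1 \sqcup D_2 \supsetneq \mathcal A$. After possibly swapping $v_1, v_2$, assume $D_1 \setminus \mathcal A \ne \emptyset$. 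Pick $a \in \mathcal A \cap D_1$, $b \in \mathcal A \cap D_2$, and set $\mathcal C = D_1 \setminus \mathcal A$; this $\mathcal C$ is a non-empty subset of $\mathcal D$.

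Send $\lambda_i \to \infty$ (equivalently $\Lambda_i \to 0$). Then $\PP_\sigma(\mathcal E_g) \to 1$ whenever $\mathcal E \in \mathcal H(\psi)$, because with probability approaching one all lineages descended from $\MRCA(\mathcal E)$ coalesce within its subtree; for other non-trivial $\mathcal E$, $\PP_\sigma(\mathcal E_g)$ tends to a value in $[0,1)$, strictly positive only when $\mathcal E$ arises as a ``compound clade'' $\bigcup_{i \in I} \desc(c_i)$ formed by the random coalescences of children $c_i$ above a non-binary node $w$ of $\psi$ (with $I$ a proper subset of $w$'s children, $|I| \ge 2$). Thus $I(\Lambda) \to N_A - N_B$, where $N_A$ and $N_B$ are the sums over $\mathcal S \subseteq \mathcal A'$ of the limits of $\PP_\sigma(\mathcal S_g \cup \{A\} \cup \mathcal C_g)$ and $\PP_\sigma(\mathcal S_g \cup \{B\} \cup \mathcal C_g)$, respectively; it therefore suffices to show $N_A \ge 1$ and $N_B = 0$.

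For $N_A$, take $\mathcal S = (D_1 \cap \mathcal A) \setminus \{a\}$, which lies in $\mathcal A' = \mathcal A \setminus \{a, b\}$ since $b \in D_2$ is disjoint from $D_1$. Then
$$\mathcal S \cup \{a\} \cup \mathcal C = (D_1 \cap \mathcal A) \cup (D_1 \setminus \mathcal A) = D_1 = \desc(v_1) \in \mathcal H(\psi),$$
contributing $1$; the other terms of $N_A$ are non-negative. For $N_B$, suppose some $\mathcal E_B := \mathcal S \cup \{b\} \cup \mathcal C$ gives a positive-limit term. Then $\mathcal E_B$ is either $\desc(w)$ for some node $w$, or $\bigcup_{i \in I} \desc(c_i)$ for children $c_i$ of a non-binary $w$. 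Since $\mathcal E_B$ meets $D_1$ (via $\mathcal C$) and $D_2$ (via $b$), the node $w$ must be $v$ or a strict ancestor of $v$. If $w = v$: $v$ is binary, admitting no proper compound clade, and $\desc(v) = D_1 \cup D_2 \ni a$. If $w$ is a strict ancestor of $v$: exactly one child $c_*$ of $w$ is an ancestor of $v$, so $c_*$ is the unique child with $b \in \desc(c_*)$; hence $c_*$ must appear in the union representing $\mathcal E_B$, but $\desc(c_*) \supseteq D_1 \cup D_2 \ni a$. Either way $a \in \mathcal E_B$, contradicting $a \notin \mathcal A' \cup \{b\} \cup \mathcal C$. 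Hence $N_B = 0$, giving $I(\Lambda) \to N_A - N_B \ge 1 \ne 0$; $I$ is not identically zero, so for any sufficiently large edge lengths the invariant fails to vanish on the clade probabilities of $\sigma = (\psi, \lambda)$.

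The main obstacle is accommodating non-binary nodes of $\psi$ away from $v$, where long edges produce random resolutions of polytomies rather than a single gene tree, and so limiting clade probabilities need not be $0$ or $1$. The argument handles this using only one-sided information: a lower bound $N_A \ge 1$ from the guaranteed $\psi$-clade $D_1$, together with a structural obstruction preventing $\mathcal E_B$ from ever being an admissible clade---namely, any $\psi$-node whose descendants span both $D_1$ and $D_2$ is necessarily an ancestor of $a$.
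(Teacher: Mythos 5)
Your proof is correct, and its skeleton coincides with the paper's: both arguments take $v=\MRCA(\mathcal A)$ with children $v_1,v_2$, choose $a\in\mathcal A\cap\desc(v_1)$, $b\in\mathcal A\cap\desc(v_2)$ and $\mathcal C=\desc(v_1)\smallsetminus\mathcal A$, and then drive the branch lengths to a degenerate limit in which the first sum of Eq.~\eqref{eq:cladeinv} is bounded below by the surviving term $\PP_\sigma\bigl((\desc(v_1))_g\bigr)$ while every term of the second sum vanishes. Where you genuinely diverge is in the choice of limit and in the mechanism for killing the second sum. The paper sends only the edge $(v,v_1)$ to infinity, collapses the other edges below $v$ to length near $0$, and leaves the edges above $v$ at arbitrary finite nonzero values; it then conditions on the partition of $\desc(v)$ induced by the lineages at $v$, where the single lineage representing $(\desc(v_1))_g$ fuses $A$ with $\mathcal C_g$, so that no gene-tree clade can contain $B$ and $\mathcal C_g$ without containing $A$. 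This keeps the entire analysis local to $v$ and is indifferent to polytomies elsewhere in $\psi$. You instead send every internal edge to infinity, which obliges you to classify all clades of positive limiting probability globally --- the clades of $\psi$ together with the compound clades $\bigcup_{i\in I}\desc(c_i)$ arising at non-binary nodes --- and then to rule out $\mathcal S\cup\{b\}\cup\mathcal C$ by the observation that any node of $\psi$ whose descendants meet both $\desc(v_1)$ and $\desc(v_2)$ is an ancestor of $a$. Both mechanisms encode the same structural fact (a gene-tree clade containing $b$'s lineage and $\mathcal C_g$ must contain $a$'s); the paper's localization buys a shorter case analysis with no global bookkeeping, while your all-edges-long limit gives a clean, uniform description of the limiting gene-tree distribution and makes the non-vanishing limit $N_A-N_B\ge 1$ fully explicit.
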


\begin{proof} Suppose $\mathcal A$ is not a clade on $\psi$.  Let
  $v=\MRCA(\mathcal A)$ on $\psi$, so $\mathcal
  E=\desc(v)\smallsetminus \mathcal A$ is non-empty. One or both
  children nodes $w_1,w_2$ of $v$ have an element of $\mathcal E$ as a
  descendant, so we may assume $\mathcal C=\desc(w_1)\cap \mathcal E$
  is non-empty. Let $a\in \mathcal A\cap\desc(w_1)$, $b\in \mathcal
  A\cap\desc(w_2)$.  Consider the clade invariant of Theorem
  \ref{thm:cladeinv} associated to $\mathcal A$, $\mathcal C$, $a$,
  $b$.

  We next give edge lengths $\lambda$ for which this invariant will
  not vanish at the clade probabilities arising from the multispecies
  coalescent on $\sigma=(\psi,\lambda)$.  Let all internal edges of
  $\psi$ below $v$ have length (near) 0 except the edge $(v,w_1)$
  which is assigned length (near) $\infty$. Lengths of edges above $v$
  can be fixed at any finite non-zero values.

  With these assignments, the only partition of $\desc(v)$ according
  to lineages at $v$ that appears with non-negligable probability is
  that with $\desc(w_1)$ forming one partition set, while all elements
  of $\desc(w_2)$ are in singleton sets. But since $\mathcal
  C\subset\desc(w_1)$, $a\in\desc(w_1)$ and $b$ is in a singleton set,
  the only clades that can result with non-negligible probability
  that contain both $B$ and elements of $\mathcal C_g$ must also
  contain $A$. Thus all the clades appearing in the second term of
  Eq.~\eqref{eq:cladeinv} have probability arbitrarily close to 0. However, the clade
  $(\desc(w_1))_g$ appears in the first term and has non-negligible
  probability. Thus Eq.~\eqref{eq:cladeinv} is violated.
\end{proof}

\begin{thm}\label{thm:meas0}
  Let $\psi$ be a rooted binary species tree topology on $\mathcal X$,
  where $\mathcal X=\mathcal A\sqcup \mathcal D$ is a disjoint union
  of non-empty subsets. If $\mathcal A$ is not a clade on $\psi$ then
  for all choices of edge lengths $\lambda$ except those in some set
  of measure zero there exists some $\mathcal C\subseteq\mathcal D$,
  $a,b\in\mathcal A$, such that the corresponding clade invariant does
  not vanish on the clade probabilities arising under the multispecies
  coalescent on $\sigma=(\psi,\lambda)$.
\end{thm}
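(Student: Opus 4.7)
The plan is to upgrade Lemma \ref{lem:nonzero}, which produces a single witnessing edge-length assignment, into a genericity statement by exploiting the polynomial structure of clade probabilities under the coalescent. Since $\psi$ is binary by hypothesis, every internal node of $\psi$ is binary, so in particular $\MRCA(\mathcal A)$ is binary and Lemma \ref{lem:nonzero} applies: there exist $\mathcal C \subseteq \mathcal D$ and $a,b \in \mathcal A$ and some parameter vector $\lambda^*$ at which the clade invariant associated to $(\mathcal A, \mathcal C, a, b)$ is non-zero. Fix this choice of $(\mathcal C,a,b)$ for the rest of the argument; the point is that the \emph{same} triple will witness non-vanishing for almost every $\lambda$.

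Next, I would record the standard fact that under the multispecies coalescent each clade probability $\PP_\sigma(\mathcal C_g)$ is a polynomial in the variables $\Lambda_i = \exp(-\lambda_i)$ indexed by the internal edges of $\psi$. This follows from the usual decomposition of $\PP_\sigma(\mathcal C_g)$ as a sum over coalescent histories, each contributing a product of waiting-time factors of the form $\Lambda_i^{\binom{m_i}{2}}$ (as in \cite{degnan2005}). Consequently the left-hand side of Eq.~\eqref{eq:cladeinv} for the fixed triple $(\mathcal C,a,b)$ equals a polynomial
\[
Q(\Lambda_1, \dots, \Lambda_k) \in \mathbb{R}[\Lambda_1,\dots,\Lambda_k].
\]
The parameter vector $\lambda^*$ produced in Lemma \ref{lem:nonzero} corresponds (after passing to a nearby interior point, by continuity of $Q$) to a point in $(0,1)^k$ at which $Q$ takes a non-zero value. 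Hence $Q$ is not the zero polynomial.

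Finally, I invoke the elementary fact that the zero locus of a non-trivial polynomial in $k$ real variables has Lebesgue measure zero in $\mathbb{R}^k$. Because the substitution $\lambda_i \mapsto e^{-\lambda_i}$ is a smooth diffeomorphism $(0,\infty)^k \to (0,1)^k$, it sends null sets to null sets. Therefore the set of internal edge length vectors $\lambda \in (0,\infty)^k$ at which the chosen invariant vanishes is a null set, and for every $\lambda$ outside this null set the triple $(\mathcal C,a,b)$ provides a non-vanishing clade invariant, as required.

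The only mildly delicate step is the passage from "non-vanishing in a limiting configuration" in Lemma \ref{lem:nonzero} to "$Q$ is not identically zero as a polynomial." The lemma's proof drives some edge lengths to $0$ and another to $\infty$, corresponding to boundary values $\Lambda_i \in \{0,1\}$; I would handle this by observing that $Q$ extends continuously to the compact box $[0,1]^k$, so a non-zero boundary limit of $Q$ implies $Q$ is non-zero on a neighbourhood, and in particular non-zero at some interior point. After that, the measure-theoretic conclusion is immediate, and there is no further obstacle.
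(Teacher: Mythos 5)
Your proposal is correct and follows essentially the same route as the paper's proof: invoke Lemma \ref{lem:nonzero} to get a witnessing triple $(\mathcal C,a,b)$ and a point of non-vanishing, observe that the invariant composed with the parametrization is a polynomial in the $\Lambda_i=\exp(-\lambda_i)$, and conclude via the measure-zero vanishing locus of a non-zero polynomial pulled back through $\lambda_i\mapsto e^{-\lambda_i}$. Your extra remark about the limiting (boundary) configurations in Lemma \ref{lem:nonzero} is a reasonable point of care that the paper elides, but it does not change the argument.
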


\begin{proof} The clade probabilities arising from $\sigma$ can be
  expressed as polynomials in the exponentials of the negatives of the
  interior edge lengths. By Lemma \ref{lem:nonzero}, there is an
  invariant which, when composed with this polynomial map, does not
  vanish at some point in the space 
  $(0,1]^{n-2}$
  of these
  exponentials. But since this composition is a polynomial, its
  non-vanishing at some point implies the set where it vanishes has
  measure zero in 
  $(0,1]^{n-2}$.
  Mapping this set to interior edge
  lengths by $-\log(x)$ shows the set of edge lengths for which the
  invariant vanishes has measure zero.
\end{proof}

Since, except for a negligible set of edge length parameters, whether
a species tree has a particular clade can be tested by examining
clade probabilities, one can similarly determine the full species
tree topology.

\begin{cor}\label{cor:genid}
  Let $\psi$ be a rooted binary species tree topology on $\mathcal
  X$. For generic choices of edge lengths $\lambda$, $\psi$ can be
  identified from the probabilities of clades under the multispecies
  coalescent on $\sigma=(\psi,\lambda)$.
\end{cor}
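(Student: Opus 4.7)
The plan is to combine Theorem \ref{thm:meas0} with the standard fact that a rooted binary tree is determined by its set of clades. Concretely, a binary rooted $\psi$ on $\mathcal X$ is recovered uniquely from $\mathcal H(\psi)$ by reading off the Hasse diagram of the inclusion order on $\mathcal H(\psi)$. Hence it suffices to show that, for generic $\lambda$, the clade probabilities $\{\PP_\sigma(\mathcal C_g)\}_{\mathcal C \subseteq \mathcal X}$ let us decide, for each subset $\mathcal A \subseteq \mathcal X$, whether or not $\mathcal A \in \mathcal H(\psi)$.

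The decision procedure I would use is the obvious one. Trivial clades need not be tested. For each candidate $\mathcal A \subseteq \mathcal X$ with $2 \le |\mathcal A| \le |\mathcal X|-1$, evaluate every linear expression of Theorem \ref{thm:cladeinv} indexed by $a,b \in \mathcal A$ and non-empty $\mathcal C \subseteq \mathcal X \smallsetminus \mathcal A$ at the given clade probabilities, and declare $\mathcal A$ to be a clade if and only if all of these expressions vanish. Theorem \ref{thm:cladeinv} guarantees that every true clade passes this test for all $\lambda$, while Theorem \ref{thm:meas0} guarantees that each non-clade $\mathcal A$ fails the test except for $\lambda$ in some set $N_{\mathcal A}$ of Lebesgue measure zero. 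Since $\psi$ is binary, $\MRCA(\mathcal A)$ is automatically a binary node for every $\mathcal A$, so the hypothesis of Lemma \ref{lem:nonzero} feeding into Theorem \ref{thm:meas0} is satisfied for all candidate subsets.

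Now set $N = \bigcup_{\mathcal A \notin \mathcal H(\psi)} N_{\mathcal A}$. Because $2^{\mathcal X}$ is finite, $N$ is a finite union of measure-zero sets and hence itself has measure zero in the space of internal edge length vectors. For any $\lambda$ outside $N$, the decision procedure above correctly identifies $\mathcal H(\psi)$, and therefore $\psi$ itself, from the clade probabilities arising on $\sigma=(\psi,\lambda)$.

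All the substantive content is packaged in Theorem \ref{thm:meas0}; the only mildly delicate point — and the one step I would flag as the main obstacle — is that the exceptional set $N_{\mathcal A}$ produced by Theorem \ref{thm:meas0} depends on $\mathcal A$, so one needs a uniform bad set across the exponentially many candidate non-clades. The resolution is simply that this collection is still finite, so taking the union preserves measure zero, and the corollary follows with no further work.
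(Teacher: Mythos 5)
Your proposal is correct and follows essentially the same route as the paper's own proof: test each candidate subset against the invariants of Theorem \ref{thm:cladeinv}, invoke Theorem \ref{thm:meas0} to exclude a measure-zero set of edge lengths for each non-clade, take the finite union of these exceptional sets, and recover $\psi$ from its set of clades. The point you flag about uniformizing the bad set over all candidates is exactly the step the paper handles the same way.
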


\begin{proof}
  For any subset of taxa $\mathcal A\subset \mathcal X$, if we find
  any invariant given by Theorem \ref{thm:cladeinv} that fails to vanish on
  the clade probabilities for $\sigma=(\psi,\lambda)$, then $\mathcal
  A$ is not a clade on $\psi$. If all such invariants vanish, then by
  Theorem \ref{thm:meas0}, either $\mathcal A$ is a clade on $\psi$,
  or $\lambda$ lies in a set of measure zero (which is dependent on
  $\mathcal A, \mathcal C,a,$ and $b$ used in defining the invariant).

  Thus, considering all proper subsets $\mathcal A$ of $\mathcal X$, we
  can determine all clades, unless the edge lengths $\lambda$ lie in a
  set of measure zero (the finite union of sets of measure zero for
  each invariant.)

  Finally, the clades of $\psi$ determine $\psi$.
\end{proof}

\begin{rem}
  If one considers a non-binary species tree to be specified by the
  binary tree topologies of some resolution along with the assignment
  of edge length 0 to any introduced edges, then both Theorem
  \ref{thm:meas0} and Corollary \ref{cor:genid} still apply. Indeed,
  the special choices of some 0 edge lengths form a set of Lebesgue measure
  zero in the full set of possible edge lengths, so regardless of
  whether such trees can be identified, the statements remain valid.
\end{rem}

\smallskip

A particular feature of non-binary species  trees that is identifiable is a
\emph{$k$-cherry},  a set of $k\ge 2$ leaves $\{x_1 \dots, x_k\} \in
\mathcal X$ that both share a common parent node and form a clade.
This will prove useful for identifying the extended species trees defined in
Section \ref{sec:defs}, which describes the sampling of multiple individuals per taxon.

\begin{prop}\label{prop:kcherry}
  (Clade probabilities determine extended species tree $k$-cherries.)
  Let $\sigma^*=(\psi^*,\lambda,\delta)$ be an extended species tree
  on $\mathcal X$ for which the pruned species tree $\psi$ is
  binary. Then the $k$-cherries of $\psi^*$ are identifiable from gene
  clade probabilities from the multispecies coalescent on $\sigma^*$
  for all choices of edge lengths $\lambda$ outside a set of
  measure zero.  
  
  In particular, $\{x_{i_1j_1}, \dots, x_{i_kj_k}\}\subseteq \mathcal
  X^*$ is a $k$-cherry on $\psi^*$ if, and only if, it is a maximal
  subset of $\mathcal X^*$ such that for every $1\le l<m\le k$ and every $y
  \in \mathcal X^*\smallsetminus\{x_{i_lj_l},x_{i_m j_m}\}$, $$\PP_\sigma(\{ X_{i_lj_l},Y\}) =
  \PP_\sigma(\{X_{i_mj_m},Y\}).$$
  
\end{prop}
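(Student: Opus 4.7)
The plan is to adapt the strategy of Proposition~\ref{prop:2cladeSwapping}---which characterizes $2$-clades of a binary species tree via cherry-swapping---to the extended species tree $\psi^*$, whose non-binary nodes occur exactly at the multi-sampled cherries. The statement has two directions: if $\mathcal K = \{x_{i_1 j_1}, \dots, x_{i_k j_k}\}$ is a $k$-cherry of $\psi^*$ then it is a maximal set satisfying the pairwise swap condition, and conversely. The first direction will follow quickly from exchangeability, while the second rests on a key lemma that distinguishes any two elements lying in different cherries via some singleton $y$.

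For the ``if'' direction, I first observe that all elements of $\mathcal K$ share a common parent $v$ in $\psi^*$. By the exchangeability of the $k$ lineages emanating from $v$ under the multispecies coalescent, permuting any two of them preserves the full gene tree distribution, and hence all clade probabilities. In particular the pairwise equality $\PP_\sigma(\{X_{i_l j_l}, Y\}) = \PP_\sigma(\{X_{i_m j_m}, Y\})$ holds for every $l, m$ and every $y$. For maximality, I would invoke the key lemma below, applied to the pair $(x, z)$ for any $x \in \mathcal K$ and $z \notin \mathcal K$: since these lie in distinct cherries, the lemma produces a distinguishing $y$.

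The bulk of the work is the following key lemma: if $x, x' \in \mathcal X^*$ lie in different cherries of $\psi^*$, then for $\lambda$ outside a measure-zero set there exists $y \in \mathcal X^* \smallsetminus \{x, x'\}$ with $\PP_\sigma(\{X, Y\}) \neq \PP_\sigma(\{X', Y\})$. Let $u = \MRCA(x, x')$. Because $x, x'$ have distinct parents in $\psi^*$, at least one of them---say $x$---satisfies $v := p(x) \neq u$, so $v$ is strictly below $u$. The internal node $v$ has at least two children, so there is a leaf descendant $y \neq x$ of $v$; moreover $y \neq x'$ because $x'$ is not descended from $v$. Writing $w = \MRCA(x, y)$, which lies at or below $v$ and strictly below $u$, I would examine the regime in which the $w$-stem is very long while the other edge lengths are generic and finite. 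In this regime, Kingman's coalescent within the $w$-stem forces the first coalescence to be some uniformly random pair among the $n$ lineages that reach $w$, so $\PP_\sigma(\{X, Y\}) \geq 1/\binom{n}{2} > 0$; simultaneously $Y$ is almost surely lost to a coalescence below $u$, so $\PP_\sigma(\{X', Y\}) \to 0$. Hence the difference $\PP_\sigma(\{X, Y\}) - \PP_\sigma(\{X', Y\})$ is a nonzero polynomial in the $\exp(-\lambda_i)$ and vanishes only on a set of measure zero.

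To close the proposition, for the ``only if'' direction, a maximal set satisfying the pairwise condition cannot contain two elements from distinct cherries (else the key lemma produces a pair of probabilities that disagree), so it lies within a single cherry, and maximality forces it to exhaust that cherry. The exceptional $\lambda$-set is a finite union over pairs $(x, x')$ of measure-zero sets, hence still of measure zero. The main obstacle will be the key lemma---specifically, verifying rigorously that the long-$w$-stem limit does separate the two probabilities as claimed. Some care is also needed for the degenerate case $p(x) = u$: then the different-cherry hypothesis forces $p(x')$ to be strictly below $u$, and a symmetric construction with the roles of $x$ and $x'$ interchanged produces the required witness $y$.
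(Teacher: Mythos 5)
Your proposal is correct and follows essentially the same strategy as the paper's proof: exchangeability of the lineages at the cherry's parent node gives the forward direction, and the converse (and maximality) is obtained by choosing a degenerate branch-length assignment with one long internal stem separating the two candidate leaves, so that one clade probability is bounded below by a positive constant while the other tends to zero, after which polynomiality in the $e^{-\lambda_i}$ yields genericity. The only differences are cosmetic: you package the separation argument as a single lemma for any two leaves with distinct parents (the paper runs two parallel instances of it, using a witness inside the cherry for maximality and a witness inside a non-leaf child of $\MRCA(\mathcal K)$ for the converse), and your stated lower bound $1/\binom{n}{2}$ should strictly carry the additional positive factor $\PP_\sigma\bigl(Y \text{ reaches } p(x) \text{ as a singleton lineage}\bigr)$ --- a point you already flag as the step needing care.
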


\begin{proof}

  Let $\mathcal K=\{x_{i_1j_1}, \dots, x_{i_kj_k}\}$ be a $k$-cherry on $\psi^*$, with MRCA the
  node $v$.  Then for any $y\in\mathcal X^*\smallsetminus\{x_{i_lj_l,}x_{i_mj_m}\}$,
  $\PP_\sigma(\{X_{i_lj_l},Y\}) =
  \PP_\sigma(\{X_{i_mj_m},Y\})$  by the exchangeability of
  $X_{i_lj_l}$ and $X_{i_mj_m}$.

  To see $\mathcal K$ is maximal with respect to this
  property, suppose $z\in \mathcal X^*\smallsetminus \mathcal K$.  (If no
  such $z$ exists, maximality is clear.) 
  We show $\mathcal K$ cannot be augmented by $z$ by showing that
  $\PP_\sigma(\{X_{i_1j_1},X_{i_2j_2}\}) \ne \PP_\sigma(\{Z,X_{i_2j_2}\})$ for some choice of
  $\lambda$. This then implies the same statement for generic values
  of $\lambda$, since these probabilities are polynomials in the
  exponentials of negative branch lengths.

  Choose all
  internal branch lengths of the species tree to be (near) 0 except
  for the branch $e$ above $v$, which we choose to have length (near)
  $\infty$. Consider the event $E$ that the $X_{i_1j_1}$ and $X_{i_2j_2}$ lineages
  coalesce on $e$ and are the first of the $\mathcal K$ lineages to do so.
  Then one sees that $$\PP_\sigma(\{X_{i_1j_i},X_{i_2j_2} \} )>\PP_\sigma(E)\approx
  \binom{k}{2}^{-1},$$ where the approximation becomes increasingly
  accurate as more extreme branch lengths are chosen.  However such
  choices of branch lengths make $\PP_\sigma(\{Z,X_{i_2j_2} \})$
 as close to 0 as
  desired, since the probability of the clade
  $\mathcal K$ goes to 1, and this is incompatible with clade $\{Z,X_{i_2j_2}\}$.
 Thus $\PP_\sigma(\{X_{i_1j_1},X_{i_2j_2}\}) \ne \PP_\sigma(\{Z,X_{i_2j_2}\})$.
  \smallskip

  To establish the converse,
  suppose now that $\mathcal K$ is maximal with
  respect to the stated property, but is not a $k$-cherry. By the above argument, maximality
  implies $\mathcal K$ is not a subset of any $l$-cherry for $l>k$.

  To achieve a contradiction, it is sufficient to show that there
  exist $x_{i_1j_1}, x_{i_2j_2} \in \mathcal K$, $y\in \mathcal X^*$ such that
  $\PP_\sigma(\{X_{i_1j_1},Y\}) \ne \PP_\sigma(\{X_{i_2j_2},Y\})$ unless branch lengths lie on
  a set of measure 0.  Let $v = \MRCA(\mathcal K)$.  Since $\mathcal
K$ is not contained in an $l$-cherry, there exists a non-leaf node
  $w$ which is a child of $v$. Moreover, $v$ is binary, since $\psi$
  is.

  Choose $x_{i_1j_1}\in  \mathcal K \smallsetminus\desc(w)$, which is non-empty
  because $v = \MRCA(\mathcal K)$. The node $w$ has at least two
  distinct leaf descendants, and since $v$ is binary at least one leaf
  descendant of $w$ must be in $\mathcal K$. Choose $x_{i_2j_2} \in \mathcal
 K \cap \desc(w)$, and $y\in \desc(w)\smallsetminus\{x_{i_2j_2} \}$.

  Let $m = | \desc (w)|$.  By choosing the length of the
  edge connecting $w$ and $v$ to be near $\infty$, and the length of
  all edges descended from $w$ to be near 0, the probability of clade
  $\{X_{i_1j_1},Y\}$ will be arbitrarily close to 0, and the probability of the
  clade $\{X_{i_2j_2},Y\}$ will be bounded below by a number arbitrarily close
  to $\binom{m}{2}^{-1}$, as in the argument above.  Thus, there exist
  branch lengths with $\PP_\sigma(\{X_{i_1j_1},Y\}) \ne \PP_\sigma(\{X_{i_2j_2},Y\})$. Because
  these probabilities are polynomials (in transformed branch lengths),
  the set of branch lengths where $\PP_\sigma(\{X_{i_1j_1},Y\}) = \PP_\sigma(\{X_{i_2j_2},Y\})$
  has measure 0.
 \end{proof}

Finally, we apply Proposition \ref{prop:kcherry} to show generic
 identifiability of species trees from clade probabilities when there
 are $\delta_i \ge 1$ lineages sampled for taxon $x_i$. (See
 Fig.~\ref{F:intra}.).

\begin{cor} 
  Let $\sigma^*=(\psi^*,\lambda,\delta)$ be an extended species tree,
  for which the pruned species tree $\psi$ is binary.  For generic
  choices of edge lengths $\lambda$, the topology of $\psi^*$ can be
  identified from the probabilities of clades under the
  multispecies coalescent.
\end{cor}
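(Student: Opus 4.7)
The plan is to bootstrap Proposition \ref{prop:kcherry} with an adaptation of Lemma \ref{lem:nonzero} and the argument of Theorem \ref{thm:meas0}, applied directly to $\psi^*$. First, I would apply Proposition \ref{prop:kcherry} to identify, from clade probabilities and for generic $\lambda$, the collection of $k$-cherries of $\psi^*$. These are precisely the groups $\{x_{i1},\dots,x_{i\delta_i}\}$ for each $i$ with $\delta_i\ge 2$ (and the lone $x_{i1}$ for $\delta_i=1$ plays the role of a singleton). This partitions $\mathcal X^*$ into species groups, which is the content of $\delta$ as a topological feature of $\psi^*$.

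Next, I would observe that every non-trivial clade of $\psi^*$ which is not itself a $k$-cherry must be \emph{cherry-respecting}, meaning that whenever it contains one element of an identified $k$-cherry it contains the entire cherry. This is because the children of a cherry parent in $\psi^*$ are exactly the leaves of that cherry, so any node strictly above a cherry parent has the whole cherry among its descendants. Consequently, to recover $\psi^*$ it is enough to decide, for each cherry-respecting $\mathcal A\subset \mathcal X^*$, whether $\mathcal A\in\mathcal H(\psi^*)$.

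For this decision I would mimic the proof of Corollary \ref{cor:genid}, but replace the appeal to Theorem \ref{thm:meas0} with a direct invocation of Lemma \ref{lem:nonzero}, since $\psi^*$ need not be binary. The key point is that for a cherry-respecting $\mathcal A$, the node $\MRCA(\mathcal A)$ sits strictly above all cherry parents involved, in the portion of $\psi^*$ which is isomorphic to $\psi$; because $\psi$ is binary, this MRCA is a binary node of $\psi^*$. Therefore Lemma \ref{lem:nonzero} applies to $\psi^*$ and $\mathcal A$: if $\mathcal A$ is not a clade on $\psi^*$, then there exist $\mathcal C\subseteq \mathcal X^*\smallsetminus\mathcal A$, $a,b\in\mathcal A$, and some choice of $\lambda$ for which the associated invariant of Theorem \ref{thm:cladeinv} does not vanish at the resulting clade probabilities. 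Since those probabilities are polynomials in the $\Lambda_i=\exp(-\lambda_i)$, the vanishing locus of such an invariant in $\lambda$-space has measure zero.

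Taking a finite union over all cherry-respecting subsets $\mathcal A$ and the finitely many associated invariants, together with the measure-zero exceptional set produced by Proposition \ref{prop:kcherry}, still has measure zero. Outside this union, every cherry-respecting clade of $\psi^*$ is correctly identified by the invariants, and combined with the cherries from the first step this gives the full set of non-trivial clades of $\psi^*$, hence the topology. The main obstacle I anticipate is precisely the verification that Lemma \ref{lem:nonzero} genuinely applies in this extended setting --- that is, confirming that the MRCA on $\psi^*$ of any cherry-respecting candidate $\mathcal A$ is indeed a binary node, so that we are not forced to confront the high-degree cherry parents, which fall outside the lemma's hypothesis.
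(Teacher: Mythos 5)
Your proposal is correct and follows essentially the same route as the paper: identify the $k$-cherries via Proposition \ref{prop:kcherry}, then note that every remaining candidate clade has a binary $\MRCA$ in $\psi^*$ so that Theorem \ref{thm:cladeinv} and Lemma \ref{lem:nonzero} apply, and finish with a finite union of measure-zero exceptional sets. Your explicit ``cherry-respecting'' reduction is just a more careful spelling-out of the paper's one-line observation that the cherry parents are the only polytomies, so the worry you flag at the end is already resolved by your own argument.
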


\begin{proof}
  All $k$-cherries of $ \psi^*$ can be identified by Proposition
  \ref{prop:kcherry} (although this is unnecessary if one assumes
  species assignments are given).  By assumption there are no other
  polytomies on $ \psi^*$; for any other clade $\mathcal A$ on the
  extended species tree, $v = \MRCA(\mathcal A)$ is a binary node. Thus Theorem
  \ref{thm:cladeinv} and Lemma \ref{lem:nonzero} apply.  These imply
  that for generic $\lambda$ all clades on the extended species tree
  can be identified, and hence $ \psi^*$ can be identified.
\end{proof}

Since this corollary does not assume that the assignment of individuals to
taxa is known in advance, it implies that under some circumstances
species assignment
can be deduced from clade probabilities. In particular, any
taxon for which three or more individuals are sampled will be identifiable
from $ \psi^*$. However taxa in which two individuals
  have been sampled will be indistinguishable from two taxa forming a
  2-clade with one individual sampled from each.

\section{Discussion}

We have shown that for generic branch lengths on a binary species
tree, it is possible to identify clades of the species tree, and
therefore the species tree topology, from probabilities of clades on gene trees.
More generally, we showed identifiability of clades consisting of taxa
descended from binary nodes even if the species tree is not itself
binary. In addition, we investigated how probable a clade on a gene tree must be
to infer it is also a clade on the species tree.

We have not shown the identifiability of branch lengths from clade
probabilities.  However, for any given species tree topology it is possible to write
systems of equations of clade probabilities as functions of the branch
lengths.  (As examples, consider the systems of equations for clade probabilities for some 4-taxon species trees shown in Table \ref{T:marginalize}.)
These systems are non-linear but polynomial in the transformed branch
lengths.  Since the number of branch length parameters is $n-2$ for an
$n$-taxon tree and there are $2^n-n-2$ non-trivial clade probabilities,
it is reasonable to expect such systems to be solvable, in principle,
for any sized tree. Although for particular small trees these can be 
solved, we have not found a general method applicable to arbitrary trees.
It thus remains conjectural that species tree branch lengths are identifiable from clade probabilities.
If multiple individuals are sampled from some species, then the species tree has additional branch length
parameters (for branches leading to such species), but will have an even larger numbers
of clades probabilities that could conjecturally be used to estimate branch lengths.

While the invariants of Theorem \ref{thm:cladeinv} are useful in proving
identifiability of a species tree topology, they do not immediately indicate 
a practical way to infer the species tree from clade
probabilities. In particular, each term in the invariant of Theorem
\ref{thm:cladeinv} is the probability 
that a random gene tree has a clade that is not a clade on the species tree.
The clade probabilities needed for the invariant of Theorem \ref{thm:cladeinv}
may therefore be quite small. 
For species trees with moderately long branches, many of
these probabilities could be difficult to estimate from
finite data sets. However, in such a situation the results of Section
\ref{sec:high} might offer an alternative way of inferring species
tree clades as those which occur with high frequency on gene
trees. This suggests the possibility of a hybrid
approach in which one accepts highly probable clades as being
clades on the species tree, as in a greedy algorithm, yet exploits the 
symmetries of clade probabilities expressed by invariants to determine 
other species clades. Thus our identifiability results should motivate further research on
species tree inference methods that are statistically consistent and that can
outperform greedy consensus on typical data sets with imperfectly
estimated clade probabilities.

\section*{Acknowledgement}
\label{sec:acknowledgement}

The authors thank the Statistical and Applied Mathematical Sciences
Institute, where this work was begun during its 2008-09 program on
Algebraic Methods in Systems Biology and Statistics, and the Institut Mittag-Leffler,
where the writing was completed during the Spring 2011 program on Algebraic Geometry with a View Towards Applications.
ESA and JAR were
supported by funds from the National Science Foundation, grant DMS
0714830, and JAR by an Erskine Fellowship from the University of
Canterbury. JHD was funded by the New Zealand Marsden Fund. All
authors contributed equally to this work.
 
{\footnotesize 
\bibliographystyle{plain}
\bibliography{bibfile}
}

\appendix

\section{Clade probabilities for subtrees as linear combinations}\label{ap:linear}

An essential difficulty in dealing with clade probabilities in
mathematical arguments is that it is not easy to see relationships
between probabilities of clades on gene trees arising from a species
tree on a set of taxa and the clade probabilities on the induced
gene trees obtained by restricting the set of taxa to a smaller set. 
This frustrates the common approach used to prove results for
large trees, by inductive arguments on the number of taxa.

Consider a set of taxa $\mathcal X$, a proper subset $\mathcal Y
\subset \mathcal X$, and a species tree $\sigma = (\psi,\lambda)$ on
$\mathcal X$.  We show here that, in general, probabilities of gene
tree clades for the induced species tree on $\mathcal Y$,
$\sigma|\mathcal Y$, cannot be written as the same linear combination
of gene tree clade probabilities for $\sigma$ for all choices of
$\psi$.  Thus there is no linear formula for the clade probabilities
for the smaller taxon set that does not depend on the species tree
topology.

This is in contrast to, for example, gene tree probabilities for
$\sigma| \mathcal Y$, which can be written as linear combinations of
gene tree probabilities for $\sigma$, where the weight assigned to
each gene tree probability in the combination has no dependency on
$\sigma$.

\medskip

To show that the same linear combination cannot be used to marginalize
clade probabilities independently
of the species tree, we consider three
species tree topologies on four taxa, as given in Table
\ref{T:marginalize} : $(((a,b),c),d)$, $((a,d),(b,c))$, and
$((a,b),(c,d))$.  There are 10 non-trivial clades, so any
linear combination of clade probabilities $c_1, \dots, c_{10}$ has the
form
\begin{equation}\label{E:lc}
\sum_{i=1}^{10} \alpha_i c_i
\end{equation}
for some $\alpha_1, \dots, \alpha_{10}$.

\begin{table}
\begin{footnotesize}
\begin{center}
\caption{Probabilities of clades under three 4-taxon species trees. $ X=\exp(-x), Y=\exp(-y)$.}\label{T:marginalize}
\begin{tabular}{c r r r}
\hline & \multicolumn{3}{c}{probability under species tree}\\
clade & $(((a,b)\tc x,c)\tc y,d)$ & $((a,d)\tc x,(b,c)\tc y)$ & $((a,b)\tc x,(c,d)\tc y)$\\
\hline $c_1 = \mathbb P(AB)$ &  $1 - \frac{2}{3}X - \frac{1}{9}XY^3$ & $\frac{2}{9}XY$ & $1 - \frac{2}{3}X - \frac{1}{9}XY$\\
$c_2 = \PP_\sigma(AC)$ & $\frac{1}{3}X - \frac{1}{9}XY^3$ & $\frac{2}{9}XY$ & $\frac{2}{9}XY$\\
 $c_3 = \PP_\sigma(AD)$ &$\frac{1}{6}XY + \frac{1}{18}XY^3$ &$1 - \frac{2}{3}X - \frac{1}{9}XY$ & $\frac{2}{9}XY$\\
 $c_4 = \PP_\sigma(BC)$ &$\frac{1}{3}X - \frac{1}{9}XY^3$ & $1 - \frac{2}{3}Y - \frac{1}{9}XY$ & $\frac{2}{9}XY$\\
 $c_5 = \PP_\sigma(BD)$ & $\frac{1}{6}XY + \frac{1}{18}XY^3$ & $\frac{2}{9}XY$ & $\frac{2}{9}XY$\\
 $c_6 = \PP_\sigma(CD)$ & $\frac{1}{3}Y - \frac{1}{6}XY +\frac{1}{18}XY^3$& $\frac{2}{9}XY$ & $1 - \frac{2}{3}Y - \frac{1}{9}XY$\\
 $c_7 = \PP_\sigma(ABC)$ &$1 - \frac{2}{3}Y - \frac{1}{3}XY + \frac{1}{6}XY^3$& $\frac{1}{3}X - \frac{1}{6}XY$ & $\frac{1}{3}Y - \frac{1}{6}XY$\\
 $c_8 = \PP_\sigma(ABD)$ & $\frac{1}{3}Y - \frac{1}{6}XY$ &$\frac{1}{3}Y - \frac{1}{6}XY$ &  $\frac{1}{3}Y - \frac{1}{6}XY$\\
  $c_9 = \PP_\sigma(ACD)$ & $\frac{1}{6}XY$ &$\frac{1}{3}Y - \frac{1}{6}XY$ & $\frac{1}{3}X - \frac{1}{6}XY$\\
 $c_{10} = \PP_\sigma(BCD)$ & $\frac{1}{6}XY$ &$\frac{1}{3}X - \frac{1}{6}XY$ &  $\frac{1}{3}X - \frac{1}{6}XY$\\
 \hline
 \end{tabular}
\end{center}
\end{footnotesize}

\end{table}

We consider obtaining the probability of clade $CD$ when the taxon set is restricted to $\{a,c,d\}$ (\emph{i.e.}, marginalizing over taxon $b$). 
Assuming first that the species tree is $((a,d)\tc x,(b,c)\tc y)$,  the restricted species tree is $((a,d)\tc x,c)$, and the probability of clade $CD$ is $\frac{1}{3}X$.  If a linear combination of the clade probabilities on the larger tree is to yield this probability, then by inserting the formulas for the $c_i$ from Table \ref{T:marginalize} into %it, 
Eq.~\eqref{E:lc}
and equating coefficients, we obtain the following equations:

\begin{align*}
\alpha_3 + \alpha_4 &= 0,\\
-2\alpha_3 + \alpha_7 + \alpha_{10} &= 1,\\
-2\alpha_4 + \alpha_8 + \alpha_9 &= 0,\\
4\alpha_1 + 4\alpha_2 -2 \alpha_3 - 2\alpha_4 +4 \alpha_5 + 4\alpha_6 - 3\alpha_7 -3\alpha_8 -3\alpha_9 -3 \alpha_{10} &= 0,
\end{align*}
where the rows correspond to the coefficients of $1$, $X$, $Y$, and $ XY$.  The system is underdetermined since there are 10 unknowns and only four equations.  

Similar systems can be obtained by considering 
other species trees.  For the other species trees in Table \ref{T:marginalize}, $(((a,b)\tc x,c)\tc y,d)$ and $((a,b)\tc x,(c,d)\tc y)$, respectively, restricting to taxa $\{a,c,d\}$ leads to 
trees $((a,c)\tc y,d)$ and $(a,(c,d)\tc y)$, and
probabilities of  clade $CD$  that are $\frac{1}{3}Y$  and $1-\frac{2}{3}Y$. Equating coefficients on all three species trees  in Table \ref{T:marginalize}, we have the equations encoded by the following $13\times 11$ augmented matrix:
\begin{scriptsize}
$$ \left[ \begin {array}{rrrrrrrrrrr} 1&0&0&0&\phantom{-}0&0&1&0&0&0&0
\\\noalign{\medskip}-2&1&0&1&0&0&0&0&0&0&0\\
\noalign{\medskip}0&0&0&0&0&1&-2&1&0&0&1
\\\noalign{\medskip}0&0&1&0&1&-1&-2&-1&1&1&0
\\\noalign{\medskip}-2&-2&1&-2&1&1&3&0&0&0&0
\\\noalign{\medskip}0&0&1&1&0&0&0&0&0&0&0
\\\noalign{\medskip}0&0&-2&0&0&0&1&0&0&1&1
\\\noalign{\medskip}0&0&0&-2&0&0&0&1&1&0&0
\\\noalign{\medskip}4&4&-2&-2&4&4&-3&-3&-3&-3&0
\\\noalign{\medskip}1&0&0&0&0&1&0&0&0&0&1
\\\noalign{\medskip}-2&0&0&0&0&0&0&0&1&1&0
\\\noalign{\medskip}0&0&0&0&0&-2&1&1&0&0&-2
\\\noalign{\medskip}-2&4&4&4&4&-2&-3&-3&-3&-3&0
\end {array} \right].$$
\end{scriptsize}
\noindent Here rows 1--5 represent the system of equations implied by the species tree $(((a,b),c),d)$, 
rows 6--9 represent the system of equations corresponding to the species tree $((a,d),(b,c))$,
and rows 10--13 represent the system of equations corresponding to the species tree $((a,b),(c,d))$.    Gaussian elimination shows this
system of 13 equations  is inconsistent.

%The system of 13 equations  is inconsistent, which can be seen by performing Gauss-Jordan elimination to obtain the matrix
%\begin{scriptsize}
%$$ \left[ \begin {array}{ccccccccccc} 1&0&0&0&0&0&0&0&0&0&0
%\\\noalign{\medskip}0&1&0&0&0&0&0&0&0&-1/2&0\\\noalign{\medskip}0&0&1&0
%&0&0&0&0&0&-1/2&0\\\noalign{\medskip}0&0&0&1&0&0&0&0&0&1/2&0
%\\\noalign{\medskip}0&0&0&0&1&0&0&0&0&1/2&0\\\noalign{\medskip}0&0&0&0
%&0&1&0&0&0&0&0\\\noalign{\medskip}0&0&0&0&0&0&1&0&0&0&0
%\\\noalign{\medskip}0&0&0&0&0&0&0&1&0&0&0\\\noalign{\medskip}0&0&0&0&0
%&0&0&0&1&1&0\\\noalign{\medskip}0&0&0&0&0&0&0&0&0&0&1
%\\\noalign{\medskip}0&0&0&0&0&0&0&0&0&0&0\\\noalign{\medskip}0&0&0&0&0
%&0&0&0&0&0&0\\\noalign{\medskip}0&0&0&0&0&0&0&0&0&0&0\end {array}
 %\right], 
%$$
%\end{scriptsize}
%where the 10th row implies that $0 = 1$.

\section{Additional clade invariants for small trees}\label{app:moreinv}

For trees on five or fewer taxa, computations of a Gr\"obner basis for
invariants in clade probabilities show that the construction
of Theorem \ref{thm:cladeinv} fails to produce all invariants, or even
all linear ones.   In this appendix, we indicate the results of such computations 
that we performed using the software \texttt{Singular} \cite{GPS09}.
We emphasize that by linear invariant we mean linear homogeneous
invariant, so that the trivial invariant, which is
inhomogeneous, is not counted when we give dimensions of spaces.

\smallskip

For the 3-taxon tree there is only a single invariant, the linear one arising from
cherry-swapping, produced by Theorem \ref{thm:cladeinv}.

\smallskip

For the 4-taxon balanced tree topology $((a,b),(c,d))$, there is a
6-dimensional space of linear invariants, yet the ones constructed in
Theorem \ref{thm:cladeinv} span only a 5-dimensional subspace. The
additional generator needed to obtain all linear invariants can be
taken to be
$$\PP_\sigma(AB)-\PP_\sigma(CD)-2\PP_\sigma(ABC)+2\PP_\sigma(ACD).$$
The ideal of all invariants has just one additional generator, which is quadratic.

\smallskip

For the $4$-taxon caterpillar tree topology $(((a,b),c),d)$, there is a
$5$-dimensional space of linear invariants. However the construction of Theorem
\ref{thm:cladeinv} produces only a $4$-dimensional space of linear
invariants. For the full space of linear invariants, the polynomial
$$\PP_\sigma(AB)+2\PP_\sigma(AC)+9\PP_\sigma(CD)-\PP_\sigma(ABC)-11\PP_\sigma(ABD)-4\PP_\sigma(ACD).$$
can be taken as the missing generator.

In addition, there were one quadratic and three cubic polynomials in a full Gr\"obner basis.

\smallskip

For the 5-taxon balanced tree topology $(((a,b),c),(d,e))$, the construction of
Theorem \ref{thm:cladeinv} produces a 14-dimensional subspace within a
16-dimensional space of linear invariants. Additional generators can
be taken to be
\begin{multline*}22\PP_\sigma(CD)+5\PP_\sigma(DE )-5\PP_\sigma(ABC)-22\PP_\sigma(ABD)
+15\PP_\sigma(CDE)\\
+10\PP_\sigma(ABCD)-25\PP_\sigma(ABDE)
-20\PP_\sigma(ACDE)
\end{multline*}
and
\begin{multline*}
11\PP_\sigma(AB)+22\PP_\sigma(AC )2-25\PP_\sigma(DE )+14\PP_\sigma(ABC )
-22\PP_\sigma(ABD )
-44\PP_\sigma(ACD )\\+24\PP_\sigma(CDE )-50\PP_\sigma( ABCD)
+4\PP_\sigma(ABDE )+56\PP_\sigma(ACDE ).
\end{multline*}

In addition to the linear invariants,
there are eight quadratic invariants and $13$ cubic invariants in a Gr\"obner
basis for the ideal.

\smallskip

For the 5-taxon psuedo-caterpillar tree topology $(((a,b),(d,e)),c)$, the
construction of Theorem \ref{thm:cladeinv} produces a 13-dimensional
subspace within a 14-dimensional space of linear invariants. An
additional generator can be taken to be
$$\PP_{\sigma}(AB)-\PP_{\sigma}(DE)-6\PP_{\sigma}(ABC)-2\PP_{\sigma}(ABD)+2\PP_{\sigma}(ADE)+6\PP_{\sigma}(CDE).$$
The algorithm for computing the full ideal of invariants for this topology did not
terminate in a reasonable amount of time, so the full ideal remains unknown.   Partial
computations in which the degree of generators is bounded show that there are generators
in degrees $2$, $3$, $4$, $5$, and $6$, in addition to linear invariants.

\smallskip

For the 5-taxon caterpillar tree $((((a,b),c),d),e)$, the construction
above produces a 11-dimensional subspace within a 12-dimensional space
of linear invariants. One choice for the additional generator is
\begin{multline*}5\PP_\sigma( AB)+10\PP_\sigma(AC )+24\PP_\sigma(CD )+62\PP_\sigma( DE)+2\PP_\sigma( ABC)
-20\PP_\sigma(ABD )\\
-29\PP_\sigma(ABE )+8\PP_\sigma(ACD )-58\PP_\sigma(ACE )
+45\PP_\sigma(CDE )-7\PP_\sigma(ABCD )\\-76\PP_\sigma(ABCE )-44\PP_\sigma(ABDE )
+2\PP_\sigma(ACDE ).
\end{multline*}
Our attempt to compute a Gr\"obner basis for the caterpillar
topology did not terminate in a reasonable amount of time.  We did, however, find 
quadratic generators in addition to the linear ones, but found no higher
degree generators.  It is reasonable to speculate that the full ideal is
generated in degree one and two for this topology.

\smallskip

It would be quite interesting 
to find general constructions that lead to the additional linear 
invariants not explained by Theorem \ref{thm:cladeinv}. Similarly, understanding
the structure of higher degree invariants by non-computational means
is an open challenge.

\end{document}